\documentclass[numbers,traditional]{ima-authoring-template}%

\theoremstyle{thmstyletwo}%
\newtheorem{theorem}{Theorem}
\newtheorem{proposition}[theorem]{Proposition}%
\newtheorem{lemma}[theorem]{Lemma}%
\newtheorem{corollary}[theorem]{Corollary}%
\newtheorem{example}{Example}%
\newtheorem{remark}{Remark}%
\newtheorem{definition}{Definition}%

\numberwithin{equation}{section}

\DeclareMathOperator{\Tr}{Tr}\DeclareMathOperator{\rank}{rank}
\DeclareMathOperator{\Ran}{Ran}\DeclareMathOperator{\diag}{diag}
\newcommand{\Om}{\Omega}\newcommand{\Vq}{V_{\mathrm q}}\newcommand{\half}{\tfrac12}
\newcommand{\Vc}{V_{\mathrm c}}\newcommand{\Nq}{N_{\mathrm q}}
\newcommand{\Jc}{J}\newcommand{\R}{\mathbb R}\newcommand{\C}{\mathbb C}
\newcommand{\Sp}{\mathrm{Sp}}\newcommand{\Sym}{\mathrm{Sym}}

\makeatletter 
\g@addto@macro\ps@opening{\def\@oddhead{}
\def\@evenhead{}} 
\makeatother  
\begin{document}

\journaltitle{} 
\firstpage{1}


\title[Quantum complexity resource in GBS]{Quantum complexity resource in Gaussian boson sampling: Core structure of the semidefinite program}

\author{Kunwar Kalra
\address{\orgdiv{Department of Mathematics}, \orgname{Texas A\&M University}, \orgaddress{\street{College Station}, \postcode{77843}, \state{Texas}, \country{USA}}}}
\author{V.~V.~Kocharovsky
\address{\orgdiv{Department of Physics \& Astronomy}, \orgname{Texas A\&M University}, \orgaddress{\street{College Station}, \postcode{77843}, \state{Texas}, \country{USA}}}}

\authormark{Kalra and Kocharovsky}


\abstract{We present a rigorous analysis of the algebraic and geometric structure of the quantum complexity resource of a system of bosonic modes in Gaussian boson sampling. This resource underlies the quantum advantage of the system: its photon-counting statistics require the evaluation of a hafnian of the resource covariance matrix, and that computation is $\sharp$P-hard. The resource covariance matrix is the solution of a semidefinite program that extracts the minimum-trace physical quantum part of the total covariance matrix; the complementary part is positive semidefinite and can therefore be simulated classically. Earlier work characterized this resource only through the trace of the quantum part, equal to its photon number. We characterize the optimizer itself, as a quantum state and as a geometric object, beyond the scalar given by its trace. We prove that it is a unique pure Gaussian state and construct an explicit oracle map, obeying an algebraic Riccati identity, that reconstructs the resource. We prove that the full problem compresses exactly onto the active symplectic sector that the dual program support generates. The passive-diagonalizable states are solved in closed form, the first explicit solvable class, and the whole program is shown to be equivalent to a minimization over the symplectic group, that is, over the Siegel upper half-space. Together these results establish that the program determines a canonical localized pure Gaussian component of the resource, and they provide the structural foundation for its detailed analysis.} 

\maketitle 
\thispagestyle{empty}

\onecolumn

\section{Introduction}\label{sec:intro}

Revealing a quantum complexity resource responsible for quantum advantage of continuous-variable (CV) quantum systems over classical systems is a central problem in quantum information science \cite{Oh2024,Serafini2023,Weedbrook2012,Zhong2020,Takeda2019,Preskill2018,Boixo2018,RadNature2025}. Many CV quantum systems employ multimode squeezed and entangled light generated via parametric down-conversion or four-wave mixing. The most advanced sources of squeezed entangled multimode light, widely used in quantum optics science and technology, are based on optical parametric amplifiers (OPA), oscillators (OPO), nonlinear waveguides, and interferometers. They usually provide light in a mixed Gaussian state \cite{Serafini2023}. 

Such multimode Gaussian light has quantum statistics that are $\sharp$P-hard to compute, and provides a quantum complexity resource suitable for a full-scale implementation of quantum advantage. A recent example is a series of experiments aimed at demonstrating quantum advantage in Gaussian boson sampling (GBS) \cite{Pan2026,Hamilton2017,Quesada2018,Zhong2019,Huh2019,Wang2019,Brod2019,PanPRL2021,Deshpande2022,Bulmer2022,Madsen2022,Deng2023,Pan2023,Yu2023}. Moreover, the GBS-like sources of multimode Gaussian light generate the cluster states that serve as the initial resource of the first experimental prototype of a one-way CV measurement-based quantum computer, Aurora \cite{RadNature2025}. It employs 84 OPO squeezers, supplying 42 GBS cells with single-mode squeezed light and furnishing 12 physical qubit modes at each clock cycle, and incorporates all major building blocks required for the universal fault-tolerant photonic quantum computer \cite{Takeda2019}. Numerous other setups and applications in modern quantum optics technologies, for instance, in networking and quantum cryptography \cite{Weedbrook2012}, are also based on multimode Gaussian light. 

Thus, it is necessary to disclose a quantum complexity resource hidden in a Gaussian state of $M$ optical modes described by a real symmetric $2M\times2M$ quadrature covariance matrix $V$ (up to a displacement we may set aside):
\begin{equation} \label{VG}
V = \left[ \begin{matrix}
\langle \hat{p}_k \hat{p}_{k'} \rangle 
            &   
\frac{1}{2}\langle \hat{q}_k \hat{p}_{k'}+\hat{p}_{k'} \hat{q}_k \rangle^T
            \\[6pt]
\frac{1}{2}\langle \hat{q}_k \hat{p}_{k'}+\hat{p}_{k'} \hat{q}_k \rangle 
            &   
\langle \hat{q}_k \hat{q}_{k'} \rangle
        \end{matrix} \right] 
= \frac{i}{2}\Omega + \langle \hat{s}\hat{s}^T \rangle. 
\end{equation}
This matrix collects the variances and correlations of the quadrature operators of the modes, listed throughout in the order $\hat{s}= (\hat{p}_1,\dots,\hat{p}_M,\hat{q}_1,\dots,\hat{q}_M)^T$. These operators do not commute. They obey canonical commutation relations $ [\hat{q}_k,\hat{p}_{k'}]=i\delta_{kk'}$ recorded by the \emph{symplectic form} 
\begin{equation} \label{Omega}
\Om=\begin{pmatrix}0&I_M\\-I_M&0\end{pmatrix},\qquad \Om^\top=-\Om,\qquad \Om^2=-I_{2M}.
\end{equation}
Heisenberg's uncertainty principle, in its multimode (Robertson--Schr\"odinger) matrix form, states that a real symmetric $V$ is the covariance matrix of an actual quantum state, in which case we call it \emph{physical}, exactly when the Hermitian matrix $V+\tfrac{i}{2}\Om$ is positive semidefinite, $V+\tfrac{i}{2}\Om\succeq0$. The least uncertain state, the vacuum (zero photons), corresponds to $V=\half I_{2M}$.

An important step toward revealing the quantum complexity resource was taken recently by Oh and co-authors \cite{Oh2024}, who introduced a classical algorithm that efficiently computes photon-counting statistics of a noisy Gaussian state if the number of squeezed photons, whose statistics require the evaluation of a hafnian that is $\sharp$P-hard to compute, is below the largest hafnian size that classical computers can still evaluate, on the order of one hundred. 
This identification of quantum advantage, or quantum supremacy, with the $\sharp$P-hardness of computing the $\sharp$P-complete hafnian is justified by the hafnian master theorem \cite{LAA2022,PRA2022} and Toda's theorem on the $\sharp$P-complete oracle \cite{Toda1991,Basu2012} and goes far beyond the GBS problem \cite{Entropy2026,OPAarXiv2026}. 

As a result, Oh and co-authors \cite{Oh2024} obtain a scalar expression for the quantum complexity resource, identifying it with the number of such incomputable squeezed photons, $\Nq(V)=\half\bigl(\Tr\Vq^\star-M\bigr)$, given by the trace of the covariance matrix's quantum part $\Vq^\star$ which minimizes this trace subject to two constraints: 
\begin{equation}\label{eq:sdp_full}
\Vq^\star=\arg\min\bigl\{\,\Tr(\Vq)\ :\ \Vq+\tfrac{i}{2}\Om\succeq0,\ \ V-\Vq\succeq0,\
\Vq\in\Sym_{2M}(\R)\,\bigr\}.
\end{equation}

In other words, the classical algorithm of Oh and co-authors\ \cite{Oh2024} isolates the nonclassical content of $V$, responsible for the $\sharp$P-hard part of the task, by means of the semidefinite program (SDP) \cite{BenTalNemirovski}, that is, a convex optimization over positive-semidefinite matrices, (\ref{eq:sdp_full}) that splits $V$ into the smallest physical \emph{quantum} part $\Vq$ and an admissible \emph{classical} noise remainder $\Vc$. The first constraint requires the quantum part $\Vq$ to be physical. The second requires the remainder $\Vc=V-\Vq$ to be a positive matrix, which is precisely the condition for it to be admissible classical noise (a random classical displacement, which is a free operation). Among all such decompositions the program selects the one of least trace, assigning all irreducible nonclassicality to $\Vq^\star$ and the remainder to classical noise. The associated quantum photon number $\Nq(V)$ counts photons above the vacuum and sets a certified lower bound on the classical cost of sampling the state. 

Previously, the quantum complexity resource of the multimode Gaussian light was evaluated on the basis of the total mean number of photons in the Bloch--Messiah eigen-squeezed modes (supermodes) associated with the vacuum of Bogoliubov quasiparticles. Referring to the quasiparticle vacuum is natural within canonical quantum statistical physics and quantum field theory, but does not account for the quantum-information aspect of the problem, in particular the computational $\sharp$P-hardness of quantum statistics of multimode systems. The paper \cite{Oh2024} shows that the common approach to estimating quantum complexity, based on the Bloch--Messiah supermodes, is misleading, overstates quantum complexity, and is responsible for false claims of quantum advantage in the recent GBS experiments \cite{PanPRL2021,Madsen2022,Deng2023}. 

Minimizing a function subject to equality and inequality constraints, as with $\Tr(\Vq)$ in Eq.~(\ref{eq:sdp_full}), is carried out by the method of Lagrange multipliers, which underlies analytical mechanics, quantum statistical physics, and field theory. 
It allows one to incorporate all of the constraints into a single Lagrangian function instead of treating them separately. 
Equating to zero the first derivatives of the Lagrangian function with respect to both the original variables and the Lagrange multipliers yields the system of equations for the Lagrangian's saddle point which determines the extremum. 
A standard application of this method is the derivation of the grand canonical ensemble from the canonical ensemble in the quantum statistical physics of Bose--Einstein condensation \cite{Zubarev1974,JStatPhys2015}. There the constraint is the conservation of the total number of particles, $\hat{N} = \mathrm{const}$, the chemical potential $\mu$ plays the role of the Lagrange multiplier, and their product $\mu \hat{N}$ constitutes an additional term in the Lagrangian function.  

Accounting for the inequality constraints entering the convex optimization (\ref{eq:sdp_full}) is done via the Karush–Kuhn–Tucker (KKT) theorem which generalizes the original Lagrange method of multipliers to the case of inequality constraints \cite{BenTalNemirovski}. Because the cost function $\Tr(\Vq)$ depends on the real symmetric matrix $\Vq$ instead of a scalar variable, the Lagrange multiplier must also be a matrix $S$ and to enter the Lagrangian function via the inner product, $\Tr(S(V-\Vq))$,  with the matrix $V-\Vq$ constituting the left side of the inequality constraint $V-\Vq\succeq0$ in Eq.~(\ref{eq:sdp_full}). Finally, the SDP in Eq.~(\ref{eq:sdp_full}) resembles a real-valued SDP because the real symmetric matrices with a positive semidefinite difference obey the L\"owner order.  

Earlier works used the program (\ref{eq:sdp_full}) almost entirely through the single number $\Nq(V)$. The present paper analyzes the optimizer $\Vq^\star$ itself, as a quantum state and as a geometric object determined by $V$, beyond the scalar given by its trace. We show that the program determines, canonically and uniquely, a \emph{pure} Gaussian state, a minimum-uncertainty component carrying no residual classical mixing, reconstructible from the dual data by an explicit algebraic formula. The classical remainder is forced to be singular along a definite number of directions, and once the relevant dual support is fixed the entire $2M$-dimensional problem reduces \emph{exactly} to a low-dimensional symplectic sector that contains all of the resource squeezing. (Photons and modes involved in the commonly examined Bloch--Messiah supermode squeezing of the quasiparticle vacuum could occupy the entire covariance $V$, or a much larger part of it, and could be significantly larger in number than those involved in the resource squeezing.)

Concretely, we establish the following structural facts. 

\emph{(i)} Every optimizer is pure (Theorem~\ref{thm:purity}). 

\emph{(ii)} The optimizer is unique (Theorem~\ref{thm:uniqueness}). 

\emph{(iii)} The inner minimization has a closed-form solution, the oracle $\Vq(A)$ (Theorem~\ref{thm:oracle}). 

\emph{(iv)} That oracle obeys an algebraic Riccati identity (Theorem~\ref{thm:riccati}), and a dual optimum reconstructs the primal one through it (Corollary~\ref{cor:dual_to_primal_reconstruction}). 

\emph{(v)} The kernel of the classical remainder has multiplicity at least $\kappa(V)$, the number of sub-vacuum directions of the covariance $V$ (Theorem~\ref{thm:kernel}). 

\emph{(vi)} The passive-diagonalizable states are solved in closed form (Theorem~\ref{thm:passive_soln}). 

\emph{(vii)} A fixed dual support reduces the problem exactly to its active symplectic sector (Theorem~\ref{thm:active_reduction}). 

\emph{(viii)} The whole program is equivalent to an optimization over $\Sp(2M)/\mathrm U(M)$ (Theorem~\ref{thm:symp_reform}).

Throughout, every structural identity and closed form is checked against a direct numerical solution of the primal SDP, and the tolerances quoted come from those checks.

Several directions build on this structural foundation and go beyond the first solvable class presented here: a fast solver based on the exact active reduction developed below; and the two-mode coupled sector, where the closed forms presented below no longer apply and a Galois-theoretic obstruction arises. These are left to future work.

The presentation proceeds in the following order: duality first, then purity and uniqueness, then the oracle and Riccati structure, then the kernel theorem, the passive closed form, and finally the exact active reduction and the geometric reformulation.

\section{Gaussian covariance preliminaries}\label{sec:prelim}

The natural symmetries of the problem are the \emph{symplectic} linear maps, the changes of quadrature frame (basis) that preserve the commutation relations: the group
$\Sp(2M,\R)=\{S\in\mathrm{GL}(2M,\R):S\Om S^\top=\Om\}$. Williamson's
theorem~\cite{Williamson1936} is the spectral theorem adapted to this group: every positive-definite matrix $A$ can be brought by a symplectic congruence
$SAS^\top=\diag(\nu_1,\dots,\nu_M,\nu_1,\dots,\nu_M)$ to a diagonal form in which each value is repeated once in a $q$-slot and once in the matching $p$-slot. The positive numbers $\nu_1\le\cdots\le\nu_M$ are the \emph{symplectic eigenvalues} of $A$. Equivalently, they are the positive numbers $\nu_j$ for which $\pm\nu_j$ are the eigenvalues of $i\Om A$, and they are the invariants of $A$ under symplectic changes of frame. For a covariance matrix (\ref{VG}) the uncertainty principle is exactly the statement that every symplectic eigenvalue is at least $\half$, and physicality $V+\tfrac{i}{2}\Om\succeq0$ is equivalent to $\nu_j(V)\ge\half$ for all $j$. Modes at the vacuum floor, $\nu_j=\half$, saturate the uncertainty relation. We call a state \emph{pure} when \emph{all} of its symplectic eigenvalues equal $\half$, that is, all Bogoliubov quasiparticles are in the vacuum state with zero average occupation. A passive (photon-number-conserving) frame change is a symplectic map that is also orthogonal; these form the maximal compact subgroup $K(2M):=\mathrm O(2M)\cap\Sp(2M,\R)$, isomorphic to the unitary group  $\mathrm U(M)$.

The program acts only on the directions in which $V$ has less variance than the vacuum. The two spectra of $V$ enter in different roles: the symplectic eigenvalues $\nu_j(V)$, invariant under every symplectic frame change, fix physicality through $\nu_j\ge\half$, while the ordinary eigenvalues $\lambda_j(V)$, invariant only under passive (orthogonal) frame changes, fix where $V$ drops below the vacuum variance. Admissible classical noise can only add covariance, so the program is governed by the ordinary spectrum through the sub-vacuum count $\kappa(V)$ defined next.

\begin{definition}[Sub-vacuum eigenspace]\label{def:subvac}
For physical covariance $V$, let $E_<(V)=\bigoplus_{\lambda_j(V)<1/2}\ker(V-\lambda_jI)$ be the span of the eigenvectors of $V$ whose (ordinary) eigenvalue dips below the vacuum floor, and let
$\kappa(V)=\dim E_<(V)$ be the number of such \emph{sub-vacuum} directions.
\end{definition}

\section{The semidefinite program and its dual}\label{sec:sdp}

The program \eqref{eq:sdp_full} is a constrained minimization, called the \emph{primal}. Associated with it is a second optimization, the \emph{dual}, assembled from the constraints; its variable is a positive-semidefinite matrix $S\succeq0$, a Lagrange multiplier for the inequality $V-\Vq\succeq0$. To state it we first isolate the inner problem.

\begin{definition}[The physical cone and the dual objective]\label{def:gbs_sdp}
Write $\mathcal Q=\{\Vq\in\Sym_{2M}(\R):\Vq+\tfrac{i}{2}\Om\succeq0\}$ for the physical quantum covariances. For a positive-definite weight $A\succ0$ let
$\Phi(A):=\min_{\Vq\in\mathcal Q}\Tr(A\Vq)$ be the least weighted trace of a physical state under that weighting, and define the \emph{dual objective}
\begin{equation}\label{eq:dual}
D(S):=-\Tr(SV)+\Phi(I+S),\qquad S\succeq0 .
\end{equation}
\end{definition}
Eq.~(\ref{eq:dual}) originates from the Lagrangian function, 
$\Tr(\Vq) - \Tr(S(V-\Vq))$, of the KKT theorem for the program \eqref{eq:sdp_full}.
The dual is the maximization $\max_{S\succeq0}D(S)$, and \emph{weak duality} is the elementary fact that every value of the dual is a lower bound for every value of the primal. A \emph{dual certificate} is a choice of $S$ whose dual value $D(S)$ equals the trace of a
candidate primal solution; since every dual value lower-bounds every primal value, such an $S$ forces the unknown optimum to equal that common value and thereby \emph{proves} optimality with no further search and without reliance on a numerical solver.

\begin{definition}[Slater's condition \cite{BenTalNemirovski}]\label{def:slater}
We say the covariance $V$ is \emph{strictly mixed} if all of its symplectic eigenvalues exceed $\half$. This is Slater's condition for \eqref{eq:sdp_full}, the standard requirement that the feasible set have an interior point: some quantum state lies strictly below $V$ in the L\"owner order.
\end{definition}

\begin{theorem}[Reduced duality and the optimality conditions]\label{thm:dual_kkt}
For physical covariance $V$ and the dual objective \eqref{eq:dual}, the following hold.
\begin{enumerate}
\item\emph{(Weak duality.)} $D(S)\le\min\{\Tr\Vq:\Vq\in\mathcal Q,\ \Vq\preceq V\}$ for every $S\succeq0$, so the reduced dual is $\max_{S\succeq0}D(S)$.
\item\emph{(Strong duality and attainment.)} If $V$ is strictly mixed, the primal and dual optima coincide and a maximizer $S^\star\succeq0$ exists.
\item\emph{(Optimality conditions.)} Under the same condition, a pair $(\Vq^\star,S^\star)$ is primal--dual optimal if and only if
\begin{equation}\label{eq:kkt_reduced}
S^\star\succeq0,\qquad \Vq^\star=\Vq(I+S^\star),\qquad V-\Vq^\star\succeq0,\qquad
\Tr\!\bigl(S^\star(V-\Vq^\star)\bigr)=0,
\end{equation}
where $\Vq(\cdot)$ is the oracle of Theorem~\ref{thm:oracle}. In particular $\Ran(S^\star)\subseteq\ker(V-\Vq^\star)$.
\end{enumerate}
\end{theorem}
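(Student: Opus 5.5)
The plan is to run standard Lagrangian duality for the conic program \eqref{eq:sdp_full}, keeping the physicality constraint $\Vq\in\mathcal Q$ implicit and dualizing only the Löwner constraint $V-\Vq\succeq0$.

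\textbf{Weak duality.} Fix $S\succeq0$ and any feasible $\Vq$, so $\Vq\in\mathcal Q$ and $V-\Vq\succeq0$. Since both $S$ and $V-\Vq$ are positive semidefinite, $\Tr(S(V-\Vq))\ge0$. Hence
\[
\Tr\Vq\ \ge\ \Tr\Vq-\Tr(S(V-\Vq))=\Tr((I+S)\Vq)-\Tr(SV)\ \ge\ \Phi(I+S)-\Tr(SV)=D(S).
\]
The second inequality is just the definition of $\Phi$, because $\Vq\in\mathcal Q$. Here $\Phi(I+S)$ is finite: $I+S\succ0$, and the minimum is attained by the oracle of Theorem~\ref{thm:oracle}. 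Finiteness could also be seen directly from $\Tr(A\Vq)\ge\sum\nu_j(A)$, a Williamson-type bound. This proves item~1.

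\textbf{Strong duality and attainment.} The main step is a Slater argument. Strict mixedness gives a $\Vq_0\in\mathcal Q$ with $V-\Vq_0\succ0$: take $\Vq_0=S_V^{-1}\tfrac12 I\,S_V^{-\top}$, where $S_VVS_V^\top=\diag(\nu,\nu)$ with all $\nu_j>\tfrac12$. Then $V-\Vq_0=S_V^{-1}\diag(\nu-\tfrac12,\nu-\tfrac12)S_V^{-\top}\succ0$. This is a strictly feasible point for the dualized constraint, while the cone constraint $\mathcal Q$ stays in the inner problem. The Lagrangian duality theorem for convex programs with a cone constraint in the Löwner order (\cite{BenTalNemirovski}) then gives zero gap and existence of a dual maximizer $S^\star\succeq0$.

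The primal optimum is attained because the feasible set is compact: $\Vq\succeq -\tfrac{i}{2}\Om$ (real part) gives $\Vq\succeq0$, and $\Vq\preceq V$ bounds it. The one technical point I expect to need care is that the theorem is applied with the nonpolyhedral set $\mathcal Q$ kept abstract. I would handle this by writing $\mathcal Q$ as a convex set, $\Tr\Vq$ as a convex objective, and $V-\Vq\succeq0$ as a linear cone constraint, so the ordinary convex Slater theorem applies directly.

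\textbf{Optimality conditions.} Given zero gap, a feasible pair is optimal iff $\Tr\Vq^\star=D(S^\star)$. By the chain of inequalities above, this holds iff both inequalities are equalities. That means $\Tr(S^\star(V-\Vq^\star))=0$, and $\Vq^\star$ attains $\Phi(I+S^\star)$.

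By Theorem~\ref{thm:oracle}, the minimizer of $\Tr((I+S^\star)\Vq)$ over $\mathcal Q$ is unique and equal to $\Vq(I+S^\star)$. So the second equality is exactly $\Vq^\star=\Vq(I+S^\star)$. For the converse direction, conditions \eqref{eq:kkt_reduced} make both inequalities tight, so the pair is optimal.

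Finally, since $S^\star$ and $V-\Vq^\star$ are both positive semidefinite, $\Tr(S^\star(V-\Vq^\star))=\|(V-\Vq^\star)^{1/2}(S^\star)^{1/2}\|_F^2=0$. Hence $(V-\Vq^\star)S^\star=0$, which gives $\Ran S^\star\subseteq\ker(V-\Vq^\star)$.
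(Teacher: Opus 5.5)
Your proposal is correct and follows essentially the same route as the paper. Weak duality comes from the Lagrangian lower bound, zero gap and dual attainment from Slater plus standard convex duality, the condition $\Vq^\star=\Vq(I+S^\star)$ from the oracle's unique inner minimizer, and $\Ran(S^\star)\subseteq\ker(V-\Vq^\star)$ from the trace-zero annihilation argument. The only variation is that you apply the convex Lagrange duality theorem directly to the partially dualized problem, keeping $\mathcal Q$ as the abstract domain, and you exhibit the Slater point explicitly; the paper instead cites conic duality for the realified SDP and simply asserts that strict mixedness is Slater's condition, so your version fits the reduced dual slightly more directly.
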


The four conditions in \eqref{eq:kkt_reduced} are the concrete certificate. The first states that the multiplier is admissible; the second states that $\Vq^\star$ is the oracle output for the weight $I+S^\star$; the third is primal feasibility; and the last is \emph{complementary slackness}, the requirement that the multiplier $S^\star$ act only where the constraint $V-\Vq^\star\succeq0$ is tight, that is, on the kernel of $V-\Vq^\star$, the directions along which no classical noise remains. The boundary case in which $V$ is itself already pure, where the feasible set degenerates to the single point $\Vq^\star=V$, is recovered by continuity.

\begin{proof}
For $S\succeq0$ the Lagrangian of the primal is
$L(\Vq,S)=\Tr(\Vq)+\Tr(S(\Vq-V))=-\Tr(SV)+\Tr((I+S)\Vq)$, and taking the infimum over $\Vq\in\mathcal Q$ gives exactly $D(S)=-\Tr(SV)+\Phi(I+S)$, a lower bound on the primal value; this is weak duality. Under Slater's condition the realified problem (Lemma~\ref{lem:realify}) is a strictly feasible finite-dimensional real SDP, and standard convex duality (Lemma~\ref{lem:strong_duality}) supplies strong duality and a dual maximizer $S^\star\succeq0$. For the weight $A=I+S^\star\succ0$ the inner minimization defining $\Phi(A)$ has the \emph{unique}
minimizer $\Vq(A)$ by Theorem~\ref{thm:oracle}, so optimality is equivalent to $\Vq^\star=\Vq(I+S^\star)$ together with feasibility $V-\Vq^\star\succeq0$ and complementary slackness $\Tr(S^\star(V-\Vq^\star))=0$. Finally, two positive matrices with zero trace product annihilate each other (Lemma~\ref{lem:kernel_cs}), so the slackness condition forces $\Ran(S^\star)\subseteq\ker(V-\Vq^\star)$.
\end{proof}

The weak duality, strong duality, and optimality content here does not require the closed form of the inner problem; the reconstruction $\Vq^\star=\Vq(I+S^\star)$ is the only place the oracle enters, and it
does so as a consequence of the oracle theorem proved below rather than as an independent input.

\section{Purity and uniqueness of the optimizer}\label{sec:purity}

The feasible set of \eqref{eq:sdp_full} contains every physical state below $V$ -- thermal, squeezed-thermal, displaced. The first structural fact is that the minimizer is always a minimum-uncertainty pure
state. Purity is a consequence of optimality: any thermal excess can be removed while preserving feasibility and strictly lowering the trace.

\begin{theorem}[Purity]\label{thm:purity}
If $V$ is physical and \eqref{eq:sdp_full} is feasible, then every optimizer $\Vq$ is pure: all of its symplectic eigenvalues equal $\half$, equivalently
\begin{equation}\label{eq:purity_id}
\Vq\,\Om\,\Vq=\tfrac14\,\Om .
\end{equation}
\end{theorem}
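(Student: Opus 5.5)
Suppose an optimizer $\Vq$ is not pure; we will construct another feasible point of strictly smaller trace, which is a contradiction. Apply Williamson's theorem to $\Vq$. Physicality gives $\nu_j\ge\half$, so $\Vq\succ0$ and there is $S\in\Sp(2M,\R)$ with
\[
\Vq=S\,\diag(\nu_1,\dots,\nu_M,\nu_1,\dots,\nu_M)\,S^\top .
\]
Set $D=\diag(\nu,\nu)$ and $D_0=\half I_{2M}$, and define the pure state $W:=S D_0 S^\top=\half SS^\top$.

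First, $W$ is physical. Since $S$ is symplectic,
\[
W+\tfrac{i}{2}\Om=S\bigl(\half I+\tfrac{i}{2}\Om\bigr)S^\top ,
\]
and $\half I+\tfrac{i}{2}\Om\succeq0$ because the eigenvalues of $i\Om$ are $\pm1$. Next, $W$ lies below $\Vq$ in the L\"owner order:
\[
\Vq-W=S(D-D_0)S^\top\succeq0,
\]
since $D-D_0$ is diagonal with entries $\nu_j-\half\ge0$. Hence $V-W=(V-\Vq)+(\Vq-W)\succeq0$, so $W$ is feasible for \eqref{eq:sdp_full}.

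The trace drops strictly. We have
\[
\Tr\Vq-\Tr W=\Tr\bigl((D-D_0)S^\top S\bigr).
\]
Here $S^\top S\succ0$, since $S$ is invertible, and $D-D_0\succeq0$ is nonzero whenever some $\nu_j>\half$. The trace of the product of a nonzero positive semidefinite matrix with a positive definite matrix is strictly positive: it equals $\sum_k (D-D_0)_{kk}(S^\top S)_{kk}$, and every diagonal entry of $S^\top S$ is positive. So $\Tr W<\Tr\Vq$, contradicting optimality. Therefore every optimizer has all $\nu_j=\half$.

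It remains to show the equivalence with \eqref{eq:purity_id}. If $\Vq=\half SS^\top$, then using $S^\top\Om S=\Om$ (which holds because $\Sp$ is closed under transpose) we get
\[
\Vq\Om\Vq=\tfrac14 S(S^\top\Om S)S^\top=\tfrac14 S\Om S^\top=\tfrac14\Om .
\]
Conversely, write $\Vq=SDS^\top$. The identity $\Vq\Om\Vq=\tfrac14\Om$ becomes $D\Om D=\tfrac14\Om$, whose off-diagonal blocks read $\nu_j^2=\tfrac14$, so $\nu_j=\half$.

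The argument is elementary. The only step that needs care is strictness: it must be a strict decrease in trace, not merely a Löwner inequality, and the diagonal-entry argument above supplies this. The boundary case where $V$ is pure is covered by the same argument, because the feasible set is then $\{V\}$ and an optimizer exists.
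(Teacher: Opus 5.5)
Your proof is correct and follows the paper's argument. Both Williamson-reduce $\Vq$, replace every symplectic eigenvalue by $\half$ to obtain a pure state that lies below $\Vq$ in the L\"owner order and so remains feasible, and note that the trace drops strictly unless $\Vq$ was already pure. Your extras (the explicit diagonal-entry argument for strictness and the converse direction of the equivalence with $\Vq\Om\Vq=\tfrac14\Om$) are sound and only make explicit what the paper leaves implicit.
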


\begin{proof}
Let $\Vq$ be any feasible quantum part, and bring it to Williamson normal form $\Vq=S^\top DS$ with $S\in\Sp(2M)$ and $D=\diag(\nu_1,\dots,\nu_M,\nu_1,\dots,\nu_M)$ its symplectic spectrum, every $\nu_j\ge\half$ by physicality. Reduce every symplectic eigenvalue to the vacuum floor: replace $D$ by $D'=\half I_{2M}$ and set $\Vq'=S^\top D'S=\half S^\top S$. Since $D'\le D$ entry-wise, congruence by $S^\top$ gives $\Vq'\preceq\Vq$, hence $V-\Vq'\succeq V-\Vq\succeq0$, so the reduced state $\Vq'$ is still feasible, and all of its symplectic eigenvalues equal $\half$. Now $\Vq-\Vq'=S^\top(D-D')S\succeq0$, and this matrix is nonzero the moment some $\nu_j>\half$, in which case $\Tr(\Vq')<\Tr(\Vq)$ strictly. A feasible point of smaller trace cannot be optimal, so
at the optimum every symplectic eigenvalue equals $\half$; the optimizer is pure. Finally, from $\Vq=\half S^\top S$ and the symplectic identity $S^\top\Om S=\Om$ one reads off
$\Vq\Om\Vq=\tfrac14 S^\top(S\Om S^\top)S=\tfrac14 S^\top\Om S=\tfrac14\Om$, which is \eqref{eq:purity_id}.
\end{proof}

\begin{remark}[Interpretation of purity]\label{rem:purity_meaning}
At the optimum the quantum part contains no mixedness: any thermal excess can be removed while strictly decreasing the trace. The trace satisfies $\Tr\Vq=M+2\Nq(\Vq)$ with $\Nq(\Vq)=\tfrac12(\Tr\Vq-M)$ the mean photon number of the quantum part, so minimizing the trace minimizes that photon number and brings every Williamson mode to the vacuum floor; the feasible states admitting no further photon removal under $V-\Vq\succeq0$ are exactly the pure ones. The optimization therefore assigns the entire quantum resource to a single pure Gaussian component, and $\Vq^\star$ is the irreducible nonclassical part of $V$.
\end{remark}

The purity identity \eqref{eq:purity_id} has a geometric interpretation used throughout the rest of the paper.

\begin{corollary}[The induced complex structure]\label{cor:complex}
For a pure covariance $\Vq$, the symplectic matrix $\Jc:=2\Om\Vq$ satisfies $\Jc^2=-I_{2M}$ and $\Jc^\top\Om\Jc=\Om$; it is a \emph{complex structure} (a real linear map squaring to $-I$, which lets one regard the real quadrature space as a complex one), compatible with $\Om$ in the sense that $g_\Jc(x,y):=-x^\top\Om\Jc y=2\,x^\top\Vq y$ is a positive-definite inner product. Operationally, $\Jc$ acts as multiplication by $i$ on phase space: its $\pm i$ eigenspaces are the annihilation and creation subspaces of the modes that diagonalize $\Vq$, so a choice of pure $\Vq$ is equivalent to a choice of which quadrature combinations serve as lowering and raising operators.
\end{corollary}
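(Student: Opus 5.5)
The plan is to verify the three claims of Corollary~\ref{cor:complex} directly from the purity identity \eqref{eq:purity_id}, $\Vq\Om\Vq=\tfrac14\Om$, using only $\Om^2=-I$ and $\Om^\top=-\Om$. The operational interpretation then follows from the Williamson form used in the proof of Theorem~\ref{thm:purity}.

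\textbf{Complex structure.} Set $\Jc=2\Om\Vq$. Then
\[
\Jc^2=4\,\Om(\Vq\Om\Vq)=4\,\Om\cdot\tfrac14\Om=\Om^2=-I_{2M}.
\]

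\textbf{Symplecticity.} Since $\Vq$ is symmetric, $\Jc^\top=2\Vq\Om^\top=-2\Vq\Om$. Hence
\[
\Jc^\top\Om\Jc=-4\,\Vq\Om\Om\Om\Vq=-4\,\Vq(-\Om)\Vq=4\,\Vq\Om\Vq=\Om.
\]
The middle product uses $\Om^3=-\Om$.

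\textbf{Compatible metric.} Compute
\[
-\Om\Jc=-2\,\Om^2\Vq=2\Vq,
\]
so $g_\Jc(x,y)=2\,x^\top\Vq y$. To see that $g_\Jc$ is positive definite, use $\Vq=\half S^\top S$ with $S\in\Sp(2M)$ from the proof of Theorem~\ref{thm:purity}. Then $g_\Jc(x,x)=|Sx|^2$, which is positive for $x\ne0$ because $S$ is invertible. More generally, physicality gives $\Vq\succeq0$, and $\Vq$ is nonsingular because its symplectic eigenvalues are nonzero.

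\textbf{Operational meaning.} Conjugating by $S$ gives
\[
S\Jc S^{-1}=2\,S\Om\Vq S^{-1}=S\Om S^\top S S^{-1}=\Om.
\]
Thus $\Jc$ is the standard structure $\Om$ written in the Williamson mode frame of $\Vq$. The $\pm i$ eigenvectors of $\Om$ in that frame are the combinations $q_k\pm i p_k$ (up to convention), i.e.\ the annihilation and creation directions. Pulling back by $S$ identifies the $\pm i$ eigenspaces of $\Jc$ with the lowering and raising operators of the modes that bring $\Vq$ to the vacuum.

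Conversely, a compatible complex structure $\Jc$ determines $\Vq=-\tfrac12\Om\Jc$. This gives the equivalence between pure covariances and choices of lowering/raising operators.

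The only step needing care is the sign bookkeeping with $\Om^\top=-\Om$ and the identification of which eigenspace is the annihilation subspace, which depends on the quadrature ordering convention $(p,q)$.
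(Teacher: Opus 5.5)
Your proof is correct and matches the paper's argument essentially line for line: the same three computations from the purity identity $\Vq\Om\Vq=\tfrac14\Om$, with positive definiteness of $g_\Jc$ following from $\Vq\succ0$ (your $|Sx|^2$ form via $\Vq=\half S^\top S$ is a more explicit version of this). Your Williamson conjugation $S\Jc S^{-1}=\Om$ for the operational remark goes beyond the paper, which states that interpretation without proof; for the converse you would still need the short check that $-\tfrac12\Om\Jc$ is symmetric and satisfies the purity identity, which follows from $\Jc^\top\Om\Jc=\Om$ and $\Jc^2=-I$.
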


\begin{proof}
Using \eqref{eq:purity_id}, $\Jc^2=4\Om\Vq\Om\Vq=4\Om(\tfrac14\Om)=\Om^2=-I$. Next
$\Jc^\top\Om\Jc=(-2\Vq\Om)\Om(2\Om\Vq)=4\Vq\Om\Vq=\Om$. And
$g_\Jc(x,y)=-x^\top\Om\Jc y=-2x^\top\Om^2\Vq y=2x^\top\Vq y$, which is positive definite because $\Vq\succ0$.
\end{proof}

Purity can be expressed by identifying the pure covariances with a homogeneous space.

\begin{lemma}[Pure states as a symplectic quotient]\label{lem:pure_param}
A matrix $\Vq\in\Sym_{2M}^{++}$ is pure if and only if $\Vq=\half S^\top S$ for some $S\in\Sp(2M)$. 
The map $S\mapsto\half S^\top S$ descends to a bijection
$\Sp(2M)/\mathrm U(M)\xrightarrow{\ \sim\ }\{\text{pure covariances}\}$.
\end{lemma}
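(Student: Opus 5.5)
The plan splits into three parts: the ``if'' direction by direct computation, the ``only if'' direction from Williamson's theorem, and then the descent to the quotient by a stabilizer computation.

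\emph{Step 1 (``if'').} Given $S\in\Sp(2M)$, the same computation as at the end of the proof of Theorem~\ref{thm:purity} shows that $\Vq=\half S^\top S$ is positive definite and satisfies $\Vq\Om\Vq=\tfrac14\Om$. Alternatively, $\half S^\top S=S^\top(\half I)S$ is already a Williamson form with all symplectic eigenvalues equal to $\half$. Note that $S^\top\in\Sp(2M)$ whenever $S$ is, since the group is closed under transpose.

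\emph{Step 2 (``only if'').} Williamson's theorem gives $T\in\Sp(2M)$ with $T\Vq T^\top=\diag(\nu,\nu)$. Purity means $\nu_j=\half$ for all $j$, so $T\Vq T^\top=\half I$, and hence $\Vq=\half S^\top S$ with $S=T^{-1\top}$. I will check that $S$ is symplectic using closure of $\Sp(2M)$ under inverse and transpose. This gives the stated equivalence and shows that the map $S\mapsto\half S^\top S$ is onto the pure covariances.

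\emph{Step 3 (descent and bijection).} The quotient should be taken by left multiplication: $S\sim US$ with $U\in K(2M)=\mathrm O(2M)\cap\Sp(2M)\cong\mathrm U(M)$. The map is constant on these classes, since $(US)^\top(US)=S^\top U^\top U S=S^\top S$.

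For injectivity on classes, suppose $S_1^\top S_1=S_2^\top S_2$, and set $U=S_1S_2^{-1}$. Then $U$ is symplectic, being a product of symplectic matrices. It is also orthogonal, because
\[
U^\top U=S_2^{-\top}S_1^\top S_1S_2^{-1}=S_2^{-\top}S_2^\top S_2S_2^{-1}=I.
\]
Hence $U\in K(2M)$ and $S_1=US_2$, so the two matrices lie in the same class.

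Finally, I will identify $K(2M)$ with $\mathrm U(M)$ via the standard embedding $X+iY\mapsto\begin{pmatrix}X&-Y\\Y&X\end{pmatrix}$. This identification was already asserted in Section~\ref{sec:prelim}.

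The only point needing care is the convention for the quotient. The correct quotient is $\mathrm U(M)\backslash\Sp(2M)$, acting on the left. It can be converted to the right quotient $\Sp(2M)/\mathrm U(M)$ by $S\mapsto S^{-1}$, or equivalently by using the parametrization $\half SS^\top$ with $S\mapsto S^\top$. I will state this explicitly so that the ``bijection $\Sp(2M)/\mathrm U(M)$'' holds as written. Beyond this convention, no real obstacle is expected.
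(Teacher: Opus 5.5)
Your proposal is correct and follows the paper's proof essentially step for step: the ``if'' direction by symplectic congruence to $\half I$, the ``only if'' direction via Williamson's theorem with $S=T^{-\top}$, and injectivity by showing that $S_1S_2^{-1}$ is orthogonal and symplectic, hence in $K(2M)\cong\mathrm U(M)$. Your remark on conventions is a legitimate tightening. The fibers are the cosets $\mathrm U(M)S$, which the paper calls ``left cosets'' while writing $\Sp(2M)/\mathrm U(M)$, and your explicit identification of $\mathrm U(M)\backslash\Sp(2M)$ with $\Sp(2M)/\mathrm U(M)$ via transpose or inversion makes the statement precise as written.
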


\begin{proof}
If $\Vq=\half S^\top S$ with $S$ symplectic, then $S^{-\top}\Vq S^{-1}=\half I$, so all symplectic eigenvalues equal $\half$ and $\Vq$ is pure. Conversely, if $\Vq$ is pure, Williamson's theorem
gives $T\in\Sp(2M)$ with $T\Vq T^\top=\half I$, whence $\Vq=\half(T^{-1})(T^{-1})^\top$ and
$S:=T^{-\top}$ works. If $\half S_1^\top S_1=\half S_2^\top S_2$, then $K:=S_2S_1^{-1}$ is symplectic with $K^\top K=I$, hence $K\in\Sp(2M)\cap\mathrm O(2M)=\mathrm U(M)$. Thus, the fiber of the map is exactly a left coset of $\mathrm U(M)$, giving the stated bijection.
\end{proof}

For a given Gaussian state there is exactly one way to extract the minimum quantum resource.

\begin{theorem}[Uniqueness]\label{thm:uniqueness}
For every physical $V$, the program \eqref{eq:sdp_full} has a unique optimizer $\Vq^\star$.
\end{theorem}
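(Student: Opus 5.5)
The plan is to combine convexity of the feasible set with the purity theorem. The key fact to prove is that a nontrivial convex combination of two \emph{distinct} pure covariances is never pure.

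\textbf{Step 1 (existence and convexity).} For physical $V$, the point $\Vq=V$ is feasible, so the feasible set is nonempty. Physicality gives $\Vq\succeq0$, because the real part of the positive semidefinite Hermitian matrix $\Vq+\tfrac{i}{2}\Om$ is positive semidefinite. Together with $\Vq\preceq V$, this places every feasible point in $\{0\preceq\Vq\preceq V\}$, which is compact. Both constraints are closed and convex in $\Vq$. Since the trace is linear, a minimizer exists, and the set of minimizers is convex. So if $V_1\neq V_2$ are two optimizers, their midpoint $W=\half(V_1+V_2)$ is also an optimizer. By Theorem~\ref{thm:purity}, all three matrices $V_1,V_2,W$ are then pure.

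\textbf{Step 2 (kernel of a pure state).} For a pure $\Vq$, I would use Corollary~\ref{cor:complex} with $\Jc=2\Om\Vq$. From $\Om^2=-I$ we get $\Om\Jc=-2\Vq$, hence
\[
\Vq+\tfrac{i}{2}\Om=-\tfrac12\Om(\Jc-iI).
\]
Since $\Om$ is invertible, $\ker(\Vq+\tfrac{i}{2}\Om)=\ker(\Jc-iI)=:L_{\Vq}$. This is the $+i$ eigenspace of the real complex structure $\Jc$, so it has complex dimension exactly $M$. Moreover $L_{\Vq}$ determines $\Jc$: $\Jc$ acts as $i$ on $L_{\Vq}$ and as $-i$ on $\overline{L_{\Vq}}$, and these two subspaces span $\C^{2M}$. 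Hence $L_{\Vq}$ determines $\Vq=-\half\Om\Jc$.

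\textbf{Step 3 (comparing kernels).} We have
\[
2\bigl(W+\tfrac{i}{2}\Om\bigr)=\bigl(V_1+\tfrac{i}{2}\Om\bigr)+\bigl(V_2+\tfrac{i}{2}\Om\bigr),
\]
a sum of two positive semidefinite Hermitian matrices. The kernel of such a sum is the intersection of the kernels of the summands, by the same annihilation argument as Lemma~\ref{lem:kernel_cs}. Therefore $L_W\subseteq L_{V_1}\cap L_{V_2}$. All three spaces have dimension $M$ by Step 2, so $L_{V_1}=L_W=L_{V_2}$, and Step 2 then gives $V_1=V_2$, a contradiction. The boundary case where $V$ is itself pure needs no separate treatment in this argument, since it uses neither Slater's condition nor duality.

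\textbf{Main obstacle.} The only delicate point is Step 2: showing that the null space of $\Vq+\tfrac{i}{2}\Om$ has dimension exactly $M$ and pins down a pure state completely. I would derive this directly from the identity $\Vq+\tfrac{i}{2}\Om=-\tfrac12\Om(\Jc-iI)$ together with $\Jc^2=-I$. As a cross-check, one can pass to Williamson form $\half S^\top S$ (Lemma~\ref{lem:pure_param}) and compute the kernel explicitly for $\half I_{2M}$, where it is spanned by the vectors $e_k+i\,e_{M+k}$ (up to sign convention), and then transport it back by $S$.
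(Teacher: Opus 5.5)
Your proposal is correct and follows essentially the same route as the paper: convexity makes a combination of two optimizers optimal and hence pure, the kernel of a sum of positive semidefinite matrices is the intersection of their kernels, so the three $M$-dimensional kernels coincide, and that kernel (the $+i$ eigenspace of $\Jc=2\Om\Vq$) determines $\Vq$. The only cosmetic difference is that you get $\dim\ker(\Vq+\tfrac{i}{2}\Om)=M$ directly from the factorization $\Vq+\tfrac{i}{2}\Om=-\tfrac12\Om(\Jc-iI)$, whereas the paper cites the rank criterion of Lemma~\ref{lem:purity_rank}.
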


\begin{proof}
The feasible set $\mathcal F(V)=\{X\in\Sym_{2M}(\R):X+\tfrac{i}{2}\Om\succeq0,\ V-X\succeq0\}$ is nonempty and closed, and every $X\in\mathcal F(V)$ obeys $0\preceq X\preceq V$, so $\mathcal F(V)$
is compact and the trace attains its minimum. Let $\Vq^{(1)},\Vq^{(2)}$ be two optimizers. For
$t\in(0,1)$ the convex combination $\Vq^{(t)}=(1-t)\Vq^{(1)}+t\Vq^{(2)}$ is feasible with the same optimal trace, hence optimal, hence pure by Theorem~\ref{thm:purity}.

Pass to the Hermitian matrices $X_j=\Vq^{(j)}+\tfrac{i}{2}\Om$. Each is positive semidefinite of rank $M$, because a physical state is pure exactly when this matrix has rank $M$
(Lemma~\ref{lem:purity_rank}). For a convex combination of positive matrices the kernel is the intersection of the kernels: if $z$ lies in the kernel of $X_t=(1-t)X_1+tX_2$, then
$0=\langle z,X_tz\rangle=(1-t)\langle z,X_1z\rangle+t\langle z,X_2z\rangle$ forces
$\langle z,X_1z\rangle=\langle z,X_2z\rangle=0$ and hence $z\in\ker X_1\cap\ker X_2$. But $\Vq^{(t)}$ is pure, so $\ker X_t$ has dimension $M$; as $\ker X_1$ and $\ker X_2$ are themselves
$M$-dimensional, their intersection having dimension $M$ forces $\ker X_1=\ker X_2$.

The kernel determines the state. For a pure $\Vq$ with complex structure $\Jc=2\Om\Vq$ (Corollary~\ref{cor:complex}), the kernel of $\Vq+\tfrac{i}{2}\Om$ is the $+i$ eigenspace of $\Jc$; since $\Jc$ is real, its complex conjugate is the $-i$ eigenspace, so the kernel pins down both eigenspaces of $\Jc$ and hence $\Jc$ itself. Applying this to $\Vq^{(1)}$ and $\Vq^{(2)}$, equality of kernels gives $\Jc_1=\Jc_2$, and since $\Vq=-\tfrac12\Om\Jc$ we conclude
$\Vq^{(1)}=\Vq^{(2)}$.
\end{proof}

\begin{remark}[Interpretation of uniqueness]\label{rem:uniqueness_meaning}
Beyond certifying the amount of irreducible nonclassicality, the program selects a single canonical covariance for the quantum component. Any remaining ambiguity is one of coordinates or gauge; the optimizer $\Vq^\star$ itself is determined uniquely.
\end{remark}

\section{The oracle and the Riccati identity}\label{sec:oracle}

Both the duality of Theorem~\ref{thm:dual_kkt} and everything that follows rest on solving the inner minimization $\Phi(A)=\min_{\Vq\in\mathcal Q}\Tr(A\Vq)$ exactly. The solution is a single matrix absolute-value computation. Write $|M|:=\sqrt{M^\top M}$ for the matrix absolute value.

\begin{definition}[Skew transport]\label{def:transport}
For $A\succ0$ set $B(A)=A^{1/2}\Om A^{1/2}$, an invertible skew-symmetric matrix, with absolute value $|B(A)|=\sqrt{-B(A)^2}$.
\end{definition}

We first record a small fact about positive block matrices with rank-one diagonal blocks, used to pin down the off-diagonal blocks in the uniqueness part of the oracle.

\begin{lemma}[Rank-one PSD block factorization]\label{lem:rankone_psd_block}
If $\bigl(\begin{smallmatrix}\alpha uu^\ast&X\\X^\ast&\beta vv^\ast\end{smallmatrix}\bigr)\succeq0$ with $\alpha,\beta>0$ and unit vectors $u,v$, then $X=\sqrt{\alpha\beta}\,c\,uv^\ast$ for some
$c\in\C$ with $|c|\le1$; in particular, if $u=v$ then $X$ is a scalar multiple of $uu^\ast$.
\end{lemma}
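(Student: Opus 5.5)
The plan is to decompose the off-diagonal block $X$ against the orthogonal splittings $\C^n=\C u\oplus u^\perp$ on the left and $\C^n=\C v\oplus v^\perp$ on the right, and to show that every component except the $u$--$v$ one vanishes.

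Write the block matrix as $P\succeq0$. Take an arbitrary $w\perp u$, an arbitrary vector $y$, and a scalar $t\in\C$. Test $P$ against the vector $(t w, y)$. The $(1,1)$ contribution is $|t|^2\alpha|u^\ast w|^2=0$, so the quadratic form equals
\[
2\,\mathrm{Re}\bigl(\bar t\, w^\ast X y\bigr)+y^\ast(\beta vv^\ast)y .
\]
This expression is affine in $t$ and must be nonnegative for all $t$. That forces $w^\ast X y=0$. Since $w\perp u$ and $y$ were arbitrary, $\Ran X\subseteq \C u$.

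The same argument applies to the right factor. Test against $(x, t z)$ with $z\perp v$ to obtain $X z=0$ for all $z\perp v$, which gives $\ker X\supseteq v^\perp$. Together these yield $X=\gamma\,uv^\ast$ for some $\gamma\in\C$, namely $\gamma=u^\ast X v$.

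It remains to bound $\gamma$. Compress $P$ onto the two-dimensional span of $(u,0)$ and $(0,v)$. This gives the $2\times2$ matrix $\bigl(\begin{smallmatrix}\alpha&\gamma\\ \bar\gamma&\beta\end{smallmatrix}\bigr)$, which is positive semidefinite as a compression of $P$. Its determinant is therefore nonnegative, so $|\gamma|^2\le\alpha\beta$. Setting $c=\gamma/\sqrt{\alpha\beta}$ gives $|c|\le1$, and the special case $u=v$ follows immediately.

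No step is a real obstacle. The only point needing care is the linear-in-$t$ argument, where one must check that the quadratic term in $t$ genuinely vanishes because $w\perp u$. An alternative route would use the standard factorization $X=A^{1/2}KB^{1/2}$ with a contraction $K$ for PSD blocks. Here $A^{1/2}=\sqrt\alpha\,uu^\ast$ and $B^{1/2}=\sqrt\beta\,vv^\ast$, so $X=\sqrt{\alpha\beta}\,(u^\ast Kv)\,uv^\ast$ with $|u^\ast Kv|\le1$. I would present the elementary version to stay self-contained.
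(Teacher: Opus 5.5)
Your proof is correct and follows the paper's own argument: you first show $\Ran X\subseteq\C u$ and $Xz=0$ for $z\perp v$, then compress to $\operatorname{span}\{\binom{u}{0},\binom{0}{v}\}$ to obtain $|\gamma|^2\le\alpha\beta$. The one difference is that the paper cites the range condition for positive block matrices, whereas you prove it directly with the linear-in-$t$ test, the same device the paper uses in Lemma~\ref{lem:pinv_schur}.
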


\begin{proof}
By the range condition for positive block matrices, $x\perp u$ implies $x\in\ker(\alpha uu^\ast)\subseteq\ker(X^\ast)$, so $\Ran X\subseteq\C u$; testing $\binom{0}{y}$ for $y\perp v$ forces $Xy=0$, so $X=\gamma uv^\ast$. Compressing to
$\operatorname{span}\{\binom{u}{0},\binom{0}{v}\}$ gives
$\bigl(\begin{smallmatrix}\alpha&\gamma\\\bar\gamma&\beta\end{smallmatrix}\bigr)\succeq0$, hence $|\gamma|^2\le\alpha\beta$.
\end{proof}

\begin{theorem}[Oracle]\label{thm:oracle}
For $A\succ0$ the weighted trace $\Tr(A\Vq)$ has a unique minimizer over $\mathcal Q$, namely
\begin{equation}\label{eq:oracle}
\Vq(A)=\half\,A^{-1/2}\,\bigl|B(A)\bigr|\,A^{-1/2},\qquad B(A)=A^{1/2}\Om A^{1/2},
\end{equation}
and the minimum value is $\Phi(A)=\sum_{j=1}^M\nu_j(A)$, the sum of the symplectic eigenvalues of $A$. The minimizer $\Vq(A)$ is pure.
\end{theorem}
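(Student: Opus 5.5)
Our plan is to transport the problem by the congruence $X\mapsto A^{1/2}XA^{1/2}$, so that the weight becomes the identity and the symplectic form becomes the skew matrix $B=B(A)$. For $\Vq\in\mathcal Q$ set $W:=A^{1/2}\Vq A^{1/2}$. Then $\Tr(A\Vq)=\Tr W$. Physicality $\Vq+\tfrac i2\Om\succeq0$ is equivalent, after conjugating by the invertible real matrix $A^{1/2}$, to $W+\tfrac i2 B\succeq0$. So we must minimize $\Tr W$ over real symmetric $W$ with $W+\tfrac i2B\succeq0$.

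Since $B$ is real skew and invertible, $iB$ is Hermitian with eigenvalues $\pm\mu_j$, $\mu_j>0$. These $\mu_j$ are the eigenvalues of $i\Om A$, because $A^{1/2}\Om A^{1/2}$ is similar to $\Om A$. By the characterization recalled in Section~\ref{sec:prelim}, $\mu_j=\nu_j(A)$. Let $P_\pm$ be the spectral projections of $iB$ onto its positive and negative eigenspaces. Then $|B|=|iB|=\mu$-weighted $(P_+ + P_-)$, and $iB=|B|(P_+-P_-)$. Because $B$ is real, $P_-=\overline{P_+}$, and $|B|$ is real symmetric.

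\textbf{Lower bound.} Test the constraint $W+\tfrac i2B\succeq0$ against the projection $P_-$. This gives
\[
0\le \Tr\bigl(P_-(W+\tfrac i2B)\bigr)=\Tr(P_-W)-\tfrac12\Tr(P_-|B|).
\]
Because $W$ and $P_-$ are real, conjugation gives $\Tr(P_+W)=\overline{\Tr(P_-W)}=\Tr(P_-W)$. Hence
\[
\Tr W=\Tr(P_+W)+\Tr(P_-W)=2\Tr(P_-W)\ge\Tr(P_-|B|)=\sum_j\mu_j=\sum_j\nu_j(A).
\]

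\textbf{Attainment.} Take $W_0=\tfrac12|B|$. Then $W_0+\tfrac i2B=\tfrac12|B|\bigl(I-(P_+-P_-)\bigr)=|B|P_-\succeq0$, and $\Tr W_0=\sum_j\mu_j$. Undoing the congruence gives \eqref{eq:oracle} and the value $\Phi(A)=\sum_j\nu_j(A)$.

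\textbf{Purity.} The matrix $\Vq(A)+\tfrac i2\Om=A^{-1/2}|B|P_-A^{-1/2}$ has rank $M$, so $\Vq(A)$ is pure by Lemma~\ref{lem:purity_rank}. Alternatively, one can verify $\Vq\Om\Vq=\tfrac14\Om$ directly: $|B|$ commutes with $B$ and $|B|B^{-1}|B|=-B$.

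\textbf{Uniqueness, the main obstacle.} Suppose $W$ attains the bound. Then equality holds in the first display, so $\Tr(P_-Y)=0$ for the PSD matrix $Y=W+\tfrac i2B$. By Lemma~\ref{lem:kernel_cs}, $YP_-=0$, so $Y=P_+YP_+$. This gives $P_-WP_-=-\tfrac i2P_-BP_-=\tfrac12|B|P_-$ and $P_+WP_-=0$. Conjugation yields $P_+WP_+=\overline{P_-WP_-}=\tfrac12|B|P_+$ and $P_-WP_+=0$. Summing the four blocks gives $W=\tfrac12|B|$.

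In this route Lemma~\ref{lem:rankone_psd_block} is not needed. If the spectral-projection argument seemed too slick, we would instead fall back on a blockwise comparison in a Williamson basis of $A$, with one $2\times2$ mode per block. That fallback is where the rank-one block lemma would pin down the off-diagonal entries. The delicate point is tracking the complex conjugation $P_-=\overline{P_+}$ carefully, since the reality of $W$ is what forces the off-diagonal blocks to vanish.
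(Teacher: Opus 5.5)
Your proof is correct, and it follows a genuinely different route from the paper's. Both arguments start with the same congruence $W=A^{1/2}\Vq A^{1/2}$.

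\emph{The paper's route.} It stays in real coordinates throughout:
\begin{enumerate}
\item it brings $B(A)$ to the orthogonal canonical form $\bigoplus_j\sigma_jJ_2$;
\item it averages over $\mathrm{SO}(2)^M$ to reduce to block-scalar matrices;
\item it applies the $2\times2$ trace bound mode by mode;
\item it proves uniqueness by forcing each diagonal block to equal $\tfrac{\sigma_j}{2}I_2$, then killing the off-diagonal blocks with the rank-one PSD block lemma plus reality.
\end{enumerate}

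\emph{Your route.} You work with the complex spectral projections $P_\pm$ of $iB$. Testing the constraint against $P_-$ amounts to an explicit dual certificate for the inner problem. The reality identity $\Tr(P_+W)=\Tr(P_-W)$ then turns that test into the lower bound. Uniqueness is a complementary-slackness argument via Lemma~\ref{lem:kernel_cs}, whose proof carries over verbatim to Hermitian matrices, followed by the conjugation $P_-=\overline{P_+}$.

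\emph{What each buys.} Your version is shorter and basis-free. It needs no Haar averaging and none of Lemmas~\ref{lem:skew_canonical}, \ref{lem:2x2_bound}, \ref{lem:rankone_psd_block}. It also isolates exactly where the reality of $\Vq$ enters: over Hermitian $W$, the choice $W=-\tfrac{i}{2}B$ would be feasible with value $0$. The paper's version keeps everything real and mode by mode, which makes explicit the physical picture (each normal mode of the weight sits at the vacuum floor) and the link to Williamson's normal form.

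\emph{A sign slip to fix.} With your own convention $iB=|B|(P_+-P_-)$, you get $W_0+\tfrac{i}{2}B=\tfrac12|B|\bigl(I+(P_+-P_-)\bigr)=|B|P_+$, not $|B|P_-$; the latter is $W_0-\tfrac{i}{2}B$. Correspondingly, $\Vq(A)+\tfrac{i}{2}\Om=A^{-1/2}|B|P_+A^{-1/2}$. This changes neither positivity nor the rank-$M$ count. In fact $|B|P_+$ is what your uniqueness step needs, since there you use $YP_-=0$.

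\emph{A small addition.} You could state explicitly that $\Tr(P_-W)$ is real, being the trace of a product of two Hermitian matrices. Your step $\overline{\Tr(P_-W)}=\Tr(P_-W)$ relies on this.
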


We call the map $A\mapsto\Vq(A)$ the \emph{oracle}: for each weighting of the quadratures it returns the pure state of least weighted trace. Once the optimal dual certificate $S^\star$
is known, the quantum part is reconstructed as $\Vq(I+S^\star)$, so the outer optimization reduces to identifying the dual support.

\begin{proof}
Set $R=A^{1/2}$ and change variables to $Y=R\Vq R$; then $\Tr(A\Vq)=\Tr(Y)$, and
$\Vq+\tfrac{i}{2}\Om\succeq0$ becomes $Y+\tfrac{i}{2}B(A)\succeq0$. Since $B(A)$ is skew and invertible, an orthogonal change of basis brings it to canonical $2\times2$ blocks: there is 
$Q\in\mathrm O(2M)$ and $\sigma_1,\dots,\sigma_M>0$ with $Q^\top B(A)Q=\bigoplus_{j=1}^M\sigma_jJ_2$,
where $J_2=\bigl(\begin{smallmatrix}0&1\\-1&0\end{smallmatrix}\bigr)$
(Lemma~\ref{lem:skew_canonical}). Writing $\widetilde Y=Q^\top YQ$, the problem is to minimize $\Tr(\widetilde Y)$ subject to $\widetilde Y+\tfrac{i}{2}\widetilde B\succeq0$, with
$\widetilde B=\bigoplus_j\sigma_jJ_2$.

The block rotation group $G=\mathrm{SO}(2)^M$ commutes with $\widetilde B$, so it preserves the feasible set and the trace; averaging any feasible $\widetilde Y$ over Haar measure on $G$ produces a feasible matrix of the same trace that commutes with all of $G$. The real symmetric matrices commuting with every block rotation are exactly the block-scalar ones
$\overline Y=\bigoplus_j\alpha_jI_2$, so it suffices to minimize over these. The constraint then decouples block-wise to $\alpha_jI_2+\tfrac{i}{2}\sigma_jJ_2\succeq0$, which forces
$2\alpha_j\ge\sigma_j$ with equality iff $\alpha_j=\tfrac{\sigma_j}{2}$ (Lemma~\ref{lem:2x2_bound}). Hence $\Tr(\widetilde Y)=\sum_j2\alpha_j\ge\sum_j\sigma_j$, with the minimizer $\widetilde Y^\star=\bigoplus_j\tfrac{\sigma_j}{2}I_2=\tfrac12|\widetilde B|$.

Uniqueness needs the off-diagonal blocks too. Let $H=\widetilde Y+\tfrac{i}{2}\widetilde B\succeq0$ with $\Tr\widetilde Y=\sum_j\sigma_j$. The blockwise bound forces each diagonal block to be $\widetilde Y_{jj}=\tfrac{\sigma_j}{2}I_2$, so $H_{jj}=\sigma_jP$ with $P=\tfrac12(I_2+iJ_2)=uu^\ast$, $u=\tfrac1{\sqrt2}\binom{1}{-i}$, a rank-one projector. For $j\ne\ell$ the principal $4\times4$ block of $H$ has rank-one diagonal blocks, and a positive matrix
with rank-one diagonal blocks has off-diagonal block a scalar multiple of $uu^\ast$ (Lemma~\ref{lem:rankone_psd_block}); but $H_{j\ell}=\widetilde Y_{j\ell}$ is real while the only real multiple of $uu^\ast=\tfrac12\bigl(\begin{smallmatrix}1&i\\-i&1\end{smallmatrix}\bigr)$ is zero. So $H$ is block diagonal and $\widetilde Y=\tfrac12|\widetilde B|$ is forced. Undoing the
conjugations gives $\Vq(A)=R^{-1}Y^\star R^{-1}=\tfrac12A^{-1/2}|B(A)|A^{-1/2}$. The singular values
of $B(A)$ are the symplectic eigenvalues of $A$ (Lemma~\ref{lem:symp_sv}), so $\Phi(A)=\sum_j\nu_j(A)$. Purity is visible in the block basis: each block of
$\widetilde Y^\star+\tfrac{i}{2}\widetilde B$ equals $\sigma_j\tfrac12(I_2+iJ_2)$, of rank one, so
$\Vq(A)+\tfrac{i}{2}\Om$ has rank $M$, the rank criterion for purity
(Lemma~\ref{lem:purity_rank}).
\end{proof}

The oracle output satisfies a quadratic identity used in every optimality check below.

\begin{theorem}[Riccati identity]\label{thm:riccati}
For $A\succ0$,
\begin{equation}\label{eq:riccati}
\Vq(A)\,A\,\Vq(A)=\tfrac14\,\Om^\top\!A\,\Om .
\end{equation}
\end{theorem}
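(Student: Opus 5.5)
The plan is to substitute the closed form \eqref{eq:oracle} and reduce the identity to an algebraic fact about the skew matrix $B=B(A)=A^{1/2}\Om A^{1/2}$. Write $R=A^{1/2}$, so $\Vq(A)=\half R^{-1}|B|R^{-1}$. Then
\[
\Vq(A)\,A\,\Vq(A)=\tfrac14 R^{-1}|B|R^{-1}R^2R^{-1}|B|R^{-1}=\tfrac14 R^{-1}|B|^2R^{-1}.
\]
By Definition~\ref{def:transport}, $|B|^2=-B^2$. Since $B^\top=-B$, this gives $-B^2=B^\top B$. Expanding,
\[
B^\top B=R\,\Om^\top R\,R\,\Om R=R\,\Om^\top A\,\Om\,R .
\]
Conjugating by $R^{-1}$ on both sides then yields $\tfrac14\Om^\top A\Om$, which is \eqref{eq:riccati}.

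There is one point to check carefully. The matrix absolute value is defined as $|M|=\sqrt{M^\top M}$, so $|B|^2=B^\top B$ holds by definition, since $|B|$ is the unique positive semidefinite square root. This is consistent with the form $\sqrt{-B^2}$ in Definition~\ref{def:transport} because $B$ is skew. No use of the block canonical form of Lemma~\ref{lem:skew_canonical} is needed. If one prefers, the block basis gives an independent check: in it $\widetilde Y^\star=\tfrac12|\widetilde B|=\bigoplus_j \tfrac{\sigma_j}{2}I_2$, and $\widetilde Y^{\star 2}=\tfrac14\bigoplus_j\sigma_j^2 I_2=\tfrac14\widetilde B^\top\widetilde B$.

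I do not expect a genuine obstacle. The only step requiring care is the bookkeeping of transposes and signs: $\Om^\top$ versus $\Om$, and the fact that $R$ is symmetric, so that $B^\top=R\Om^\top R$. As a consistency remark, for $A=I$ the identity reduces to $\Vq\Vq=\tfrac14\Om^\top\Om=\tfrac14 I$, so $\Vq=\half I$, the vacuum, as expected. One could also compare with the purity identity \eqref{eq:purity_id}, but that comparison is not needed for the proof.
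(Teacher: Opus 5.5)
Your proposal is correct and follows essentially the same route as the paper: substitute the closed form \eqref{eq:oracle}, collapse the middle factor $A^{-1/2}AA^{-1/2}=I$, and use skewness of $B(A)$ to write $|B(A)|^2=B(A)^\top B(A)=A^{1/2}\Om^\top A\,\Om A^{1/2}$. Conjugating by $A^{-1/2}$ then gives $\tfrac14\Om^\top A\Om$, exactly as in the paper.
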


This is an algebraic Riccati equation~\cite{Lancaster1995}, a quadratic matrix equation in which the unknown $X$ enters to second order through the product $XAX$;
such equations arise in optimal control and filtering. Here it states that the oracle output is exactly the pure $X$ solving $XAX=\tfrac14\Om^\top A\Om$, and this is the form in which the oracle
is inverted: given a target pure state, one tests whether a weight returns it by checking the identity.

\begin{proof}
Because $B(A)$ is skew, $|B(A)|^2=B(A)^\top B(A)=A^{1/2}\Om^\top A\Om A^{1/2}$, so
\begin{multline*}
\Vq(A)\,A\,\Vq(A)=\tfrac14A^{-1/2}|B(A)|A^{-1/2}\cdot A\cdot A^{-1/2}|B(A)|A^{-1/2}\\
=\tfrac14A^{-1/2}|B(A)|^2A^{-1/2}=\tfrac14\Om^\top\!A\,\Om. \qedhere
\end{multline*}
\end{proof}

Combining the oracle with the optimality conditions maps the dual maximizer to the primal optimizer explicitly.

\begin{corollary}[Reconstruction from the dual]\label{cor:dual_to_primal_reconstruction}
Assume Slater's condition, and let $S^\star\succeq0$ maximize the dual $D(S)$ of \eqref{eq:dual}.
Set $A^\star=I+S^\star$. Then the unique primal optimizer is
\[
\Vq^\star=\Vq(A^\star)=\half\,(A^\star)^{-1/2}\bigl|B(A^\star)\bigr|(A^\star)^{-1/2},
\]
with $V-\Vq^\star\succeq0$ and $\Tr(S^\star(V-\Vq^\star))=0$, and $D(S^\star)=\Tr\Vq^\star$.
\end{corollary}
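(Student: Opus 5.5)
The argument combines Theorem~\ref{thm:dual_kkt}, Theorem~\ref{thm:oracle} and Theorem~\ref{thm:uniqueness}; no new estimates are needed. Under Slater's condition, part~2 of Theorem~\ref{thm:dual_kkt} gives strong duality: the primal optimum $p^\star=\min\{\Tr\Vq:\Vq\in\mathcal Q,\ \Vq\preceq V\}$ equals $\max_{S\succeq0}D(S)=D(S^\star)$. By Theorem~\ref{thm:uniqueness}, the primal problem has a unique optimizer $\Vq^\star$.

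The central step is to show that $(\Vq^\star,S^\star)$ is a primal--dual optimal pair, so that the conditions \eqref{eq:kkt_reduced} apply. I would argue directly rather than only cite part~3. For feasible $\Vq^\star$,
\[
D(S^\star)=-\Tr(S^\star V)+\Phi(I+S^\star)\le-\Tr(S^\star V)+\Tr\bigl((I+S^\star)\Vq^\star\bigr)=\Tr\Vq^\star-\Tr\bigl(S^\star(V-\Vq^\star)\bigr)\le\Tr\Vq^\star .
\]
The first inequality holds because $\Phi$ is a minimum over $\mathcal Q\ni\Vq^\star$. The second holds because $S^\star\succeq0$ and $V-\Vq^\star\succeq0$ give a nonnegative trace product.

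Strong duality makes the two ends of this chain equal. Hence both inequalities are equalities. Equality in the second gives complementary slackness, $\Tr(S^\star(V-\Vq^\star))=0$. Equality in the first says that $\Vq^\star$ attains $\Phi(A^\star)$ for the weight $A^\star=I+S^\star$. This weight is positive definite, since $I+S^\star\succeq I$.

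Theorem~\ref{thm:oracle} then states that the minimizer of $\Tr(A^\star\,\cdot)$ over $\mathcal Q$ is unique and equals $\Vq(A^\star)$, given by \eqref{eq:oracle}. Therefore $\Vq^\star=\Vq(A^\star)=\half(A^\star)^{-1/2}|B(A^\star)|(A^\star)^{-1/2}$. Feasibility $V-\Vq^\star\succeq0$ holds because $\Vq^\star$ is primal optimal, and $D(S^\star)=\Tr\Vq^\star$ is the strong-duality equality already used.

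The only delicate point is the identification step: that the dual maximizer's oracle output really is the primal optimizer, and not just some minimizer of the Lagrangian. The argument above handles this through uniqueness of the inner minimizer in Theorem~\ref{thm:oracle}. That uniqueness in turn rests on Lemma~\ref{lem:rankone_psd_block}. Without it, the inner problem could have several minimizers and the reconstruction formula would be ambiguous.

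Uniqueness of the primal optimizer (Theorem~\ref{thm:uniqueness}) makes "the unique primal optimizer" well defined. Together with the oracle identity, it also shows that every dual maximizer yields the same $\Vq^\star$. This holds even if $S^\star$ itself is not unique.
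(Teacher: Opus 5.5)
Your proposal is correct and follows essentially the paper's route. The paper cites part~3 of Theorem~\ref{thm:dual_kkt}, whose proof is your sandwich argument in compressed form, together with uniqueness of the oracle minimizer, and then derives $D(S^\star)=\Tr\Vq^\star$ from complementary slackness; you instead derive slackness from the strong-duality equality, which is a harmless reordering. (Your chain also applies to any primal optimizer, so under Slater it already forces uniqueness without needing Theorem~\ref{thm:uniqueness}.)
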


\begin{proof}
By Theorem~\ref{thm:dual_kkt} an optimal pair satisfies $\Vq^\star=\Vq(I+S^\star)$ together with feasibility and complementary slackness, and Theorem~\ref{thm:oracle} makes the inner minimizer the single matrix $\Vq(A^\star)$. For the value,
$D(S^\star)=-\Tr(S^\star V)+\Phi(A^\star)=-\Tr(S^\star V)+\Tr(A^\star\Vq^\star)$; expanding
$\Tr(A^\star\Vq^\star)=\Tr\Vq^\star+\Tr(S^\star\Vq^\star)$ and using slackness
$\Tr(S^\star(V-\Vq^\star))=0$ gives $D(S^\star)=\Tr\Vq^\star$.
\end{proof}

Solving the outer dual both certifies the optimal value and, through the single-valued map $S^\star\mapsto\Vq(I+S^\star)$, reconstructs the optimizer $\Vq^\star$. The oracle is also
equivariant under symplectic frame changes, which allows closed-form computations to be reduced to a convenient frame.

\begin{proposition}[Equivariance]\label{prop:equivariance}
For $S\in\Sp(2M)$, $\Vq(S^\top AS)=S^{-1}\Vq(A)S^{-\top}$.
\end{proposition}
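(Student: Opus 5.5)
The cleanest route avoids manipulating the matrix square roots in \eqref{eq:oracle} and uses instead the variational characterization and the uniqueness part of Theorem~\ref{thm:oracle}. Fix $S\in\Sp(2M)$ and $A\succ0$, and set $A'=S^\top AS$. Note $A'\succ0$, so $\Vq(A')$ is defined. The plan is to exhibit a bijection of the physical cone $\mathcal Q$ onto itself that intertwines the two weighted-trace objectives. Then the unique minimizer for $A'$ must be the image of the unique minimizer for $A$.

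\textbf{Step 1: the congruence preserves $\mathcal Q$.} Consider the map $T(X)=S^{-1}XS^{-\top}$ on $\Sym_{2M}(\R)$. It is a linear bijection with inverse $X\mapsto SXS^\top$. Since $S$ is symplectic, so is $S^{-1}$, and therefore $S^{-1}\Om S^{-\top}=\Om$. This gives
\[
T(X)+\tfrac{i}{2}\Om=S^{-1}\bigl(X+\tfrac{i}{2}\Om\bigr)S^{-\top}.
\]
The right-hand side is a congruence of a Hermitian matrix by the real invertible matrix $S^{-1}$, so it is positive semidefinite if and only if $X+\tfrac{i}{2}\Om\succeq0$. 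Hence $T(\mathcal Q)=\mathcal Q$. The same argument applied to $S$ shows that the inverse map also preserves $\mathcal Q$.

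\textbf{Step 2: the objectives correspond.} By cyclicity of the trace,
\[
\Tr\bigl(A'\,T(X)\bigr)=\Tr\bigl(S^\top AS\,S^{-1}XS^{-\top}\bigr)=\Tr(AX).
\]
So minimizing $\Tr(A'Y)$ over $Y\in\mathcal Q$ is the same problem as minimizing $\Tr(AX)$ over $X\in\mathcal Q$, under the substitution $Y=T(X)$. In particular the minimum values agree, $\Phi(A')=\Phi(A)$.

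\textbf{Step 3: compare minimizers.} By Theorem~\ref{thm:oracle}, the unique minimizer of $\Tr(AX)$ is $\Vq(A)$. By Steps 1 and 2, $T(\Vq(A))$ lies in $\mathcal Q$ and attains $\Phi(A')$. Theorem~\ref{thm:oracle} also says the minimizer for $A'$ is unique, so $\Vq(A')=T(\Vq(A))=S^{-1}\Vq(A)S^{-\top}$, which is the claim. As a consistency check, $\Phi(A')=\Phi(A)$ agrees with the invariance of symplectic eigenvalues under symplectic congruence, since $\Phi(A)=\sum_j\nu_j(A)$.

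The only delicate point is Step 1, specifically the identity $S^{-1}\Om S^{-\top}=\Om$, which follows from the defining relation $S\Om S^\top=\Om$ of $\Sp(2M)$. One also has to note that congruence by a real invertible matrix preserves positive semidefiniteness of Hermitian matrices. This argument is much shorter than verifying \eqref{eq:oracle} directly. The direct route would be harder because $(S^\top AS)^{1/2}$ does not factor in terms of $A^{1/2}$, so it would require a polar-decomposition argument. If one wanted the direct route anyway, one could use the Riccati identity (Theorem~\ref{thm:riccati}) together with purity to characterize $\Vq(A)$ as the positive pure solution of $XAX=\tfrac14\Om^\top A\Om$, check that $S^{-1}\Vq(A)S^{-\top}$ solves the transformed equation, and then appeal to uniqueness. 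The variational route above avoids all of this.
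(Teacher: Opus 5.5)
Your proposal is correct and is essentially the paper's argument: the paper uses the substitution $W=S\Vq S^\top$, the inverse of your map $T$, as a trace-preserving bijection of $\mathcal Q$, and then invokes the uniqueness in Theorem~\ref{thm:oracle}. You additionally spell out why $\mathcal Q$ is preserved, namely $S^{-1}\Om S^{-\top}=\Om$ together with congruence by a real invertible matrix; the paper leaves this step implicit.
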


\begin{proof}
The substitution $W=S\Vq S^\top$ is a bijection of $\mathcal Q$ preserving $\Tr(S^\top AS\cdot\Vq)=\Tr(A\cdot S\Vq S^\top)$, so it carries the minimizer for $S^\top AS$ to the minimizer for $A$.
\end{proof}

\section{Universal kernel multiplicity}\label{sec:kernel}

The optimality conditions already showed $\Ran(S^\star)\subseteq\ker(V-\Vq^\star)$: the classical remainder $\Vc^\star=V-\Vq^\star$ is singular wherever the dual multiplier acts. The next theorem turns this into a universal lower bound on how many directions of classical noise must be exhausted, governed by the sub-vacuum count $\kappa(V)$. The argument uses two facts: the annihilation lemma already used above and a one-sided derivative bound.

\begin{lemma}[Annihilation from a zero trace product]\label{lem:kernel_cs}
If $A,B\succeq0$ and $\Tr(AB)=0$, then $\Ran(A)\subseteq\ker(B)$.
\end{lemma}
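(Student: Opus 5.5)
The plan is to use the positive square roots of the two matrices. Since $A,B\succeq0$, both $A^{1/2}$ and $B^{1/2}$ exist and are positive semidefinite. By cyclicity of the trace,
\[
\Tr(AB)=\Tr\bigl(A^{1/2}A^{1/2}B^{1/2}B^{1/2}\bigr)=\Tr\bigl((B^{1/2}A^{1/2})^\ast(B^{1/2}A^{1/2})\bigr)=\|B^{1/2}A^{1/2}\|_F^2 .
\]
The hypothesis $\Tr(AB)=0$ then forces the Frobenius norm to vanish, so $B^{1/2}A^{1/2}=0$.

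Next I would multiply on the left by $B^{1/2}$ and on the right by $A^{1/2}$, which gives $BA=0$. This says exactly that every vector $Ax$ lies in $\ker B$, that is, $\Ran(A)\subseteq\ker(B)$. The same argument works verbatim for Hermitian positive semidefinite matrices, so it covers both the real symmetric setting of Theorem~\ref{thm:dual_kkt} and any complexified use.

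The only point needing care is the identification of $\Tr(AB)$ with a squared norm, and that rests only on $A^{1/2}$ and $B^{1/2}$ being self-adjoint. An alternative route, if preferred, is to diagonalize $A=\sum_i a_i u_iu_i^\ast$ with $a_i\ge0$. Then $0=\Tr(AB)=\sum_i a_i\,u_i^\ast Bu_i$ is a sum of nonnegative terms, so $u_i^\ast Bu_i=0$ whenever $a_i>0$. For $B\succeq0$ this gives $\|B^{1/2}u_i\|^2=0$, hence $Bu_i=0$. Since these $u_i$ span $\Ran(A)$, the inclusion follows. No step is a genuine obstacle.
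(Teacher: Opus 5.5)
Your proof is correct and is essentially the paper's argument. The paper observes that $A^{1/2}BA^{1/2}\succeq0$ has trace $\Tr(AB)=0$, hence vanishes, so that $BA^{1/2}=0$ and $\Ran(A)=\Ran(A^{1/2})\subseteq\ker(B)$; your identity $\Tr(AB)=\|B^{1/2}A^{1/2}\|_F^2$ is the same square-root computation, and it makes explicit the step from $A^{1/2}BA^{1/2}=0$ to $BA^{1/2}=0$ that the paper leaves implicit, while passing to $BA=0$ lets you avoid the range identity $\Ran(A)=\Ran(A^{1/2})$.
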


\begin{proof}
$A^{1/2}BA^{1/2}\succeq0$ has trace zero, hence is zero, so $BA^{1/2}=0$ and $\Ran(A)=\Ran(A^{1/2})\subseteq\ker(B)$.
\end{proof}

\begin{lemma}[A one-sided derivative bound]\label{lem:one_sided}
Let $S\succeq0$ and let $u$ be a unit vector with $Su=0$. For $A(t)=I+S+tuu^\top$ the right derivative of $\Phi(A(t))=\sum_j\nu_j(A(t))$ at $t=0$ satisfies $\Phi'(0^+)\ge\half$.
\end{lemma}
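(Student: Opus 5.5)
Since $\Phi(A)=\min_{\Vq\in\mathcal Q}\Tr(A\Vq)$ is a minimum of functions linear in $A$, it is concave. Along the ray $A(t)=I+S+tuu^\top$ the right derivative therefore exists, and Danskin's theorem gives
\[
\Phi'(0^+)=\min\{\,u^\top X u:\ X \text{ a minimizer at } A(0)\,\}.
\]
By Theorem~\ref{thm:oracle} the minimizer at $A_0:=I+S\succ0$ is unique, so the plan is to show
\[
\Phi'(0^+)=u^\top\Vq(A_0)\,u,
\]
and then to prove $u^\top\Vq(A_0)u\ge\half$ using $Su=0$.

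\textbf{Step 1 (the derivative is $u^\top\Vq(A_0)u$).} The upper bound is elementary: $\Phi(A(t))\le\Tr(A(t)\Vq(A_0))=\Phi(A_0)+t\,u^\top\Vq(A_0)u$, so $\limsup_{t\downarrow0}(\Phi(A(t))-\Phi(A_0))/t\le u^\top\Vq(A_0)u$. For the matching lower bound, write
\[
\Phi(A(t))=\Tr(A(t)\Vq(A(t)))\ge\Phi(A_0)+t\,u^\top\Vq(A(t))u ,
\]
using $\Tr(A_0\Vq(A(t)))\ge\Phi(A_0)$. Then let $t\downarrow0$ using continuity of $A\mapsto\Vq(A)$. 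This continuity is clear from the explicit formula \eqref{eq:oracle}: the matrix square root and the absolute value $|B|=\sqrt{B^\top B}$ are continuous on positive-definite arguments, and $B(A)$ is invertible. Alternatively it follows from compactness of the relevant sublevel sets together with uniqueness. Both bounds together give $\Phi'(0^+)=u^\top\Vq(A_0)u$.

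\textbf{Step 2 (the key inequality).} Here I would use the Riccati identity, Theorem~\ref{thm:riccati}. Set $X=\Vq(A_0)$. Since $Su=0$, we have $A_0u=u$. By Cauchy--Schwarz in the $A_0$ inner product,
\[
(u^\top X u)^2=(u^\top A_0 X u)^2\le (u^\top A_0 u)(u^\top XA_0Xu)=u^\top XA_0Xu .
\]
By \eqref{eq:riccati}, $u^\top XA_0Xu=\tfrac14 (\Om u)^\top A_0(\Om u)\ge\tfrac14\|\Om u\|^2=\tfrac14$, because $A_0\succeq I$ and $\Om$ is orthogonal. Hence $(u^\top Xu)^2\ge\tfrac14$, and since $X\succ0$ we get $u^\top Xu\ge\half$.

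I expect Step 1 to be the only delicate point, namely justifying the one-sided derivative rigorously. The continuity route via the closed-form oracle avoids needing general Danskin machinery. Step 2 is short once one notices that $A_0$ fixes $u$, which is what lets the Riccati identity convert $A_0\succeq I$ into a lower bound on $u^\top Xu$.
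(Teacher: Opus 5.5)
\textbf{Gap in Step 2.} Step 1 is fine; it is the same two-sided estimate as the paper's Lemma~\ref{lem:phi_directional}. Step 2 does not prove the claim, because its two inequalities point in opposite directions and cannot be chained. Cauchy--Schwarz gives an \emph{upper} bound, $(u^\top Xu)^2\le u^\top XA_0Xu$. The Riccati identity gives a \emph{lower} bound, $u^\top XA_0Xu\ge\tfrac14$, on the same middle quantity. From $a\le b$ and $b\ge\tfrac14$ nothing follows about $a$.

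This is not a slip that rearranging the scalars can fix. Put $Y=A_0^{1/2}XA_0^{1/2}$ and use $A_0^{-1/2}u=u$. Then $u^\top Xu=u^\top Yu$ and $u^\top XA_0Xu=u^\top Y^2u=\|Yu\|^2\ge(u^\top Yu)^2$. So the single number $u^\top XA_0Xu$ always dominates $(u^\top Xu)^2$. Knowing only $u^\top XA_0Xu\ge\tfrac14$ is therefore compatible with $u^\top Xu$ being small, for instance when $u$ has a small component along a large eigenvalue of $Y$.

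\textbf{Repair.} What you need is a matrix-level bound, and the Riccati identity does supply one. Since $A_0\succeq I$ and $\Om$ is orthogonal, $\Om^\top A_0\Om\succeq I$. Hence
$Y^2=\tfrac14A_0^{1/2}\Om^\top A_0\Om A_0^{1/2}\succeq\tfrac14A_0\succeq\tfrac14 I$.
Because $Y\succ0$ has the same eigenvectors as $Y^2$, this gives $Y\succeq\half I$, and so $u^\top Xu=u^\top Yu\ge\half$.

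This repaired version is the paper's proof in different notation. In fact $Y=\half|B(A_0)|$. The paper reads off $u^\top\Vq(A_0)u=\half\,u^\top|B(A_0)|u$ directly from the oracle formula together with $A_0^{-1/2}u=u$. It then observes that every singular value of $A_0^{1/2}\Om A_0^{1/2}$ is at least $1$, so $|B(A_0)|\succeq I$.
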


\begin{proof}
Since $Su=0$ we have $Au=u$ for $A:=A(0)=I+S$, so $A\succeq I$ and $A^{-1/2}u=u$. Differentiating
$\Phi$ along the curve (Lemma~\ref{lem:phi_directional}) gives
$\Phi'(0^+)=\Tr(uu^\top\Vq(A))=u^\top\Vq(A)u=\tfrac12u^\top|B(A)|u$, using the oracle formula and $A^{-1/2}u=u$. Now $A\succeq I$ makes $A^{1/2}$ have smallest singular value at least $1$, and $\Om$
is orthogonal, so every singular value of $B(A)=A^{1/2}\Om A^{1/2}$ is at least $1$, i.e. $|B(A)|\succeq I$. Hence $\Phi'(0^+)=\tfrac12u^\top|B(A)|u\ge\half$.
\end{proof}

\begin{theorem}[Universal kernel multiplicity]\label{thm:kernel}
Let $V$ be physical and strictly mixed. Then every optimal decomposition $V=\Vq^\star+\Vc^\star$ has
\begin{equation}\label{eq:kernel_bound}
\dim\ker(\Vc^\star)\ge\kappa(V).
\end{equation}
\end{theorem}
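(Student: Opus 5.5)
The approach is to route the bound through the dual certificate. By Theorem~\ref{thm:dual_kkt}, since $V$ is strictly mixed, a dual maximizer $S^\star\succeq0$ exists and complementary slackness gives $\Ran(S^\star)\subseteq\ker(V-\Vq^\star)=\ker(\Vc^\star)$. By Theorem~\ref{thm:uniqueness} the optimal decomposition is unique, so this $\Vq^\star$ is the one in the statement. It therefore suffices to prove
\[
\rank S^\star\ge\kappa(V).
\]
Equivalently, $\dim\ker S^\star\le 2M-\kappa(V)$. The plan is to show that the variance of $V$ along every direction annihilated by $S^\star$ is at least the vacuum value, and then to count dimensions against the sub-vacuum eigenspace $E_<(V)$.

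\textbf{Step 1: first-order optimality along rank-one directions.} Fix a unit vector $u$ with $S^\star u=0$. The matrix $S^\star+t\,uu^\top$ is positive semidefinite for $t\ge0$, so it is dual feasible, and optimality of $S^\star$ gives $D(S^\star+t\,uu^\top)\le D(S^\star)$. Dividing by $t$ and letting $t\to0^+$ yields
\[
-u^\top V u+\Phi'(0^+)\le 0 .
\]
Here $\Phi$ is evaluated along $A(t)=I+S^\star+t\,uu^\top$. The right derivative exists: $\Phi$ is a pointwise minimum of linear functions of $A$, hence concave, and Lemma~\ref{lem:phi_directional} identifies the derivative.

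Lemma~\ref{lem:one_sided}, applied with $S=S^\star$, gives $\Phi'(0^+)\ge\half$. Combining the two inequalities,
\[
u^\top V u\ge\half\quad\text{for every unit } u\in\ker S^\star ,
\]
that is, the compression of $V$ to $\ker S^\star$ dominates $\half I$.

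\textbf{Step 2: dimension count.} Suppose $\dim\ker S^\star>2M-\kappa(V)$. Since $\dim E_<(V)=\kappa(V)$, the two subspaces have dimensions summing to more than $2M$, so they meet in a nonzero vector. Take such a $u$ of unit length. Because $u\in E_<(V)$ lies in the span of eigenvectors of $V$ with eigenvalues strictly below $\half$, we get $u^\top Vu<\half$. This contradicts Step 1. Hence $\dim\ker S^\star\le 2M-\kappa(V)$, so $\rank S^\star\ge\kappa(V)$, and therefore
\[
\dim\ker\Vc^\star\ge\rank S^\star\ge\kappa(V).
\]

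\textbf{Main obstacle.} The delicate point is the justification of the one-sided variational inequality in Step 1. One must check three things. First, the direction $+uu^\top$ is admissible only in the positive sense, which is exactly why a one-sided derivative is all that is needed. Second, the linear term $-\Tr(SV)$ contributes precisely $-u^\top Vu$ to that derivative. Third, the right derivative of the concave function $\Phi$ is the one controlled by Lemma~\ref{lem:one_sided}. Everything else is linear algebra.
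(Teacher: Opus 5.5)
Your proof is correct and follows the paper's argument essentially verbatim. Complementary slackness with Lemma~\ref{lem:kernel_cs} reduces the claim to $\rank S^\star\ge\kappa(V)$, and a unit vector $u\in E_<(V)\cap\ker S^\star$ makes $S^\star+t\,uu^\top$ an ascent direction, since by Lemma~\ref{lem:one_sided} its right derivative is $-u^\top Vu+\Phi'(0^+)>0$. The only difference is cosmetic: you invoke uniqueness of $\Vq^\star$, whereas the paper uses that complementary slackness holds for every primal--dual optimal pair, and both routes are valid.
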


The classical remainder must therefore be singular along at least $\kappa(V)$ independent directions: in those directions the classical noise floor is saturated and $\Vc^\star$ is singular. For a single sub-vacuum eigenvalue it recovers the one-direction nullification of the constructive decomposition of Ref.~\cite{Entropy2026}.

\begin{proof}
Complementary slackness and Lemma~\ref{lem:kernel_cs} give
$\dim\ker(\Vc^\star)\ge\rank(S^\star)$, so it suffices to show $\rank(S^\star)\ge\kappa(V)$. Suppose not, $\rank(S^\star)<\kappa(V)$. Then $\dim\ker(S^\star)=2M-\rank(S^\star)>2M-\kappa(V)$, so the sub-vacuum space and the kernel of $S^\star$ have dimensions summing past $2M$ and must meet: choose a unit vector $u\in E_<(V)\cap\ker(S^\star)$, for which $u^\top Vu<\half$ since $u\in E_<(V)$. Perturb the dual point along $S(t)=S^\star+tuu^\top$. Because $S^\star u=0$, Lemma~\ref{lem:one_sided} applies, and
\[
\frac{d}{dt}\Big|_{t=0^+}D(S(t))=-u^\top Vu+\Phi'(0^+)\ge-u^\top Vu+\half>0,
\]
contradicting the optimality of $S^\star$. Hence $S^\star$ is injective on $E_<(V)$ and $\rank(S^\star)\ge\kappa(V)$.
\end{proof}

\begin{remark}[Interpretation of the kernel multiplicity]\label{rem:kernel_meaning}
The null directions of the classical remainder are directions in which the classical noise floor is saturated, so $\Vc^\star$ is singular there. The irreducible nonclassical content cannot be distributed arbitrarily across the ambient mode space: at least $\kappa(V)$ directions lie in $\ker(\Vc^\star)$ for every optimal decomposition. The statement is a \emph{multiplicity} bound; it does not claim that each individual sub-vacuum eigenvector of $V$ lies in $\ker(\Vc^\star)$, only that at least $\kappa(V)$ independent null directions are forced.
\end{remark}

\section{Exact solution for passive-diagonalizable states}\label{sec:passive}

A state is \emph{passive-diagonalizable} when it can be brought to diagonal form by a passive (orthogonal-symplectic) frame change, $V=U^\top DU$ with $U\in K(2M)$. These are exactly the states
whose nonclassical sector aligns with the quadrature axes after a beam-splitter network, and for them the program decouples into independent single modes with a fully explicit solution.

\begin{theorem}[Closed form: passive-diagonalizable case]\label{thm:passive_soln}
Let $V=U^\top DU$ with $U\in K(2M)$ and $D=\diag(a_1,\dots,a_M,b_1,\dots,b_M)$, $a_j\le b_j$,
$a_jb_j\ge\tfrac14$, ordered so that $a_1,\dots,a_\kappa<\half$ are the sub-vacuum values,
$a_{\kappa+1},\dots,a_M\ge\half$, with $\kappa=\kappa(V)$. Then the program is solved by
\begin{align}
\Vq^\star&=U^\top\diag\!\Bigl(a_1,\dots,a_\kappa,\tfrac12,\dots,\tfrac12,\nonumber\\
&\qquad\qquad\tfrac1{4a_1},\dots,\tfrac1{4a_\kappa},\tfrac12,\dots,\tfrac12\Bigr)U,\label{eq:Vq_passive}\\
S^\star&=U^\top\Bigl(\diag\!\bigl(\tfrac1{4a_1^2}-1,\dots,\tfrac1{4a_\kappa^2}-1,0,\dots,0\bigr)
\oplus0_M\Bigr)U,\label{eq:S_passive}\\
\Vc^\star&=U^\top\diag\!\Bigl(0,\dots,0,\ a_{\kappa+1}-\tfrac12,\dots,a_M-\tfrac12,\nonumber\\
&\qquad\qquad b_1-\tfrac1{4a_1},\dots,b_\kappa-\tfrac1{4a_\kappa},\ b_{\kappa+1}-\tfrac12,\dots,b_M-\tfrac12\Bigr)U.\label{eq:Vc_passive}
\end{align}
The shadow prices are $s_j=\tfrac1{4a_j^2}-1>0$ for each sub-vacuum mode.
\end{theorem}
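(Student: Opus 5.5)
The plan is to reduce to the diagonal frame and then produce a dual certificate in the sense of Section~\ref{sec:sdp}: a feasible $\Vq^\star$ and an $S^\star\succeq0$ with $D(S^\star)=\Tr\Vq^\star$. Weak duality (Theorem~\ref{thm:dual_kkt}(1)) then proves optimality without invoking Slater's condition. Uniqueness of the optimizer then follows from Theorem~\ref{thm:uniqueness}.

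\emph{Step 1: reduction to $U=I$.} Since $U\in K(2M)$ is both orthogonal and symplectic, conjugation $X\mapsto UXU^\top$ has the following properties.
\begin{enumerate}
\item It preserves the trace and the L\"owner order, because $U$ is orthogonal.
\item It preserves the physical cone $\mathcal Q$, because $U\Om U^\top=\Om$.
\end{enumerate}
The program for $V$ is therefore carried bijectively onto the program for $D$, with the same value and with $\Vq,\Vc,S$ all transformed by the same congruence. The same holds for the dual: $\Tr(SV)$ is invariant, and $\Phi(I+S)$ is invariant because $\Phi(U^\top AU)=\Phi(A)$ for orthogonal symplectic $U$ (Proposition~\ref{prop:equivariance}). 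It therefore suffices to treat $V=D$.

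\emph{Step 2: primal feasibility.} For each mode $j$, slot $j$ (a $p$-slot) and slot $M+j$ (a $q$-slot) form a $2\times2$ block paired by $\Om$, so all matrices involved decouple into single-mode problems.
\begin{itemize}
\item[] \emph{Physicality and purity of $\Vq^\star$.} Each block of $\Vq^\star$ is $\diag(a_j,\tfrac1{4a_j})$ or $\tfrac12 I_2$. Each has determinant $\tfrac14$, so every symplectic eigenvalue equals $\tfrac12$ and $\Vq^\star$ is physical and pure.
\item[] \emph{Positivity of $\Vc^\star=D-\Vq^\star$.} Its entries are $0$; $a_j-\tfrac12\ge0$ for $j>\kappa$; $b_j-\tfrac1{4a_j}\ge0$, which follows from $a_jb_j\ge\tfrac14$; and $b_j-\tfrac12\ge a_j-\tfrac12\ge0$ for $j>\kappa$.
\end{itemize}
Thus $\Vq^\star$ is feasible and $\Vc^\star\succeq0$. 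Also $S^\star\succeq0$, since $s_j=\tfrac1{4a_j^2}-1>0$ because $a_j<\tfrac12$. Complementary slackness $\Tr(S^\star\Vc^\star)=0$ is immediate: $S^\star$ is supported on the first $\kappa$ $p$-slots, exactly where $\Vc^\star$ vanishes.

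\emph{Step 3: the oracle and the dual value.} This is the only computation with real content. The weight $A=I+S^\star$ is diagonal and splits into $2\times2$ blocks $\diag(\alpha,\beta)$, with
\begin{itemize}
\item[] $(\alpha,\beta)=(\tfrac1{4a_j^2},1)$ for $j\le\kappa$, and
\item[] $(\alpha,\beta)=(1,1)$ for $j>\kappa$.
\end{itemize}
On such a block, $B(A)=\sqrt{\alpha\beta}\,J_2$, so $|B(A)|=\sqrt{\alpha\beta}\,I_2$. Formula \eqref{eq:oracle} then gives
\[
\tfrac12\sqrt{\alpha\beta}\,\diag(\alpha^{-1},\beta^{-1}),
\]
which equals $\diag(a_j,\tfrac1{4a_j})$ for sub-vacuum modes and $\tfrac12I_2$ otherwise. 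Hence $\Vq(I+S^\star)=\Vq^\star$. The symplectic eigenvalue of each block is $\sqrt{\alpha\beta}$, so by Theorem~\ref{thm:oracle}
\[
\Phi(A)=\sum_{j\le\kappa}\tfrac1{2a_j}+(M-\kappa).
\]
We also have $\Tr(S^\star D)=\sum_{j\le\kappa}\bigl(\tfrac1{4a_j}-a_j\bigr)$. Therefore
\[
D(S^\star)=\sum_{j\le\kappa}\Bigl(a_j+\tfrac1{4a_j}\Bigr)+(M-\kappa)=\Tr\Vq^\star.
\]
By weak duality, $\Vq^\star$ is optimal and $S^\star$ is a dual maximizer. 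One could alternatively verify the Riccati identity \eqref{eq:riccati} blockwise, but the explicit oracle evaluation is more direct.

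The main care point is Step 3, specifically keeping the $p$-slot/$q$-slot pairing straight so that the blocks genuinely decouple under $\Om$. Avoiding any appeal to Slater's condition is essential, since boundary cases with $a_jb_j=\tfrac14$ are allowed; this is why the argument certifies optimality through weak duality rather than through Theorem~\ref{thm:dual_kkt}(3).
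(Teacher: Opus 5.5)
Your proposal is correct and follows essentially the same route as the paper. Both arguments reduce to $U=I$ by passive equivariance, evaluate the oracle blockwise at $A=I+S^\star$ to recover $\Vq^\star$, check $\Vc^\star\succeq0$ and complementary slackness entrywise, and conclude by weak duality, which avoids Slater's condition at the boundary $a_jb_j=\tfrac14$. The only difference is cosmetic: you compute $D(S^\star)$ directly from $\Phi(A)=\sum_j\nu_j(A)$ and $\Tr(S^\star D)$, whereas the paper obtains $D(S^\star)=\Tr\Vq^\star$ from $\Vq^\star=\Vq(I+S^\star)$ together with $\Tr\bigl(S^\star(V-\Vq^\star)\bigr)=0$.
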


Each mode decouples, and the optimal quantum state of a sub-vacuum mode is the squeezed vacuum whose squeezing $r_j$ obeys $\tfrac12e^{-2r_j}=a_j$: the sub-vacuum variance is retained unchanged, and its conjugate is set to the reciprocal $\tfrac1{4a_j}$ required by minimum uncertainty, and every other mode stays in vacuum. This gives a closed-form solution for this entire class, with no SDP solver required.

\begin{proof}
By equivariance (Proposition~\ref{prop:equivariance}) it suffices to treat $U=I$, so $V=D$ is diagonal. With $A=I+S^\star$ from \eqref{eq:S_passive}, the weight restricted to mode $j$ is
$\diag(\tfrac1{4a_j^2},1)$ for $j\le\kappa$ and $I_2$ for $j>\kappa$, so the oracle acts mode by mode and returns $\diag(a_j,\tfrac1{4a_j})$ on the sub-vacuum modes and $\tfrac12I_2$ on the rest; this is exactly $\Vq^\star$ of \eqref{eq:Vq_passive}. Primal feasibility $\Vc^\star\succeq0$ is the diagonal check $b_j-\tfrac1{4a_j}\ge0$ (from $a_jb_j\ge\tfrac14$) on sub-vacuum modes and
$a_j-\tfrac12\ge0$, $b_j-\tfrac12\ge0$ on the rest. The support of $S^\star$ is the set of sub-vacuum $q$-slots, where $V-\Vq^\star=0$, so $\Tr(S^\star\Vc^\star)=0$ and, since $\Vq^\star=\Vq(I+S^\star)$,
\[
D(S^\star)=-\Tr(S^\star V)+\Tr\bigl((I+S^\star)\Vq^\star\bigr)
=\Tr\Vq^\star-\Tr\bigl(S^\star(V-\Vq^\star)\bigr)=\Tr\Vq^\star .
\]
Weak duality (Theorem~\ref{thm:dual_kkt}) then confines the optimal value between $D(S^\star)$ and $\Tr\Vq^\star$, so $\Vq^\star$ is optimal and the displayed $S^\star,\Vc^\star$ are the optimal
certificate and remainder.
\end{proof}

\begin{example}[A one-mode active sector with a spectator]\label{ex:passive_core}
Take, in the $(p_1,p_2,q_1,q_2)$ order of Eq.~\eqref{VG}, $V=\diag(\tfrac14,\tfrac12,1,\tfrac32)$, a physical two-mode
covariance: the first mode is the pure squeezed covariance $\diag(\tfrac14,1)$, and the second is a mixed passive spectator $\diag(\tfrac12,\tfrac32)$. Only $\tfrac14$ falls below the vacuum floor, so $\kappa(V)=1$. Theorem~\ref{thm:passive_soln} returns
\[
\Vq^\star=\diag\!\bigl(\tfrac14,\tfrac12,1,\tfrac12\bigr),\qquad
\Vc^\star=\diag(0,0,0,1),\qquad S^\star=\diag(3,0,0,0),
\]
pure (the first mode squeezed, the second in vacuum) and unique by Theorem~\ref{thm:uniqueness}. The active symplectic space fixed by the dual support is
$W^\star=\Ran(S^\star)+\Om\Ran(S^\star)=\operatorname{span}\{p_1,q_1\}$, so the four-dimensional program reduces exactly to the single mode carrying the squeezing while the second mode contributes
only vacuum to $\Vq^\star$ and a classical remainder to $\Vc^\star$. This is the smallest instance of the main result: the program determines a canonical pure component together with its exact active sector. Here $V$ is itself pure on the first mode, so the feasible set meets the Slater boundary, where Slater's condition fails; the closed form above needs only weak duality and the explicitly attained certificate
$S^\star$, so it holds regardless. Each entry matches a direct SDP solve to $2\times10^{-10}$.
\end{example}

\section{Active symplectic reduction}\label{sec:active}

The optimizer's quantum content is supported on the symplectic subspace generated by the dual support, and the complementary directions contribute only vacuum. Once that support is fixed, the rest of the
space can be eliminated exactly, by a Schur complement, leaving a problem of size at most $2\,\rank(S^\star)$. This elimination loses no information: positivity of $V-\Vq$ on the full space is equivalent to a positivity condition on the active block alone, corrected by the term accounting for the spectator block.

\begin{definition}[Active symplectic space]\label{def:active}
For $S\succeq0$ let $W=\Ran(S)+\Om\Ran(S)$, the \emph{active symplectic space} generated by $S$.
\end{definition}

\begin{lemma}[$W$ is a symplectic subspace]\label{lem:W_symp}
For every $S\succeq0$, both $W$ and $W^\perp$ are $\Om$-invariant, and $\Om|_W$ is nondegenerate.
\end{lemma}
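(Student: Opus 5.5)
The plan rests on two elementary facts about $\Om$: it is orthogonal and skew, so $\Om^\top=\Om^{-1}=-\Om$, and $\Om^2=-I$. Set $R:=\Ran(S)$, so that $W=R+\Om R$.

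\emph{Invariance of $W$.} Applying $\Om$ to $W$ gives
\[
\Om W=\Om R+\Om^2R=\Om R+R=W .
\]

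\emph{Invariance of $W^\perp$.} Take $x\in W^\perp$ and $w\in W$. Then
\[
\langle \Om x,w\rangle=\langle x,\Om^\top w\rangle=-\langle x,\Om w\rangle=0,
\]
because $\Om w\in W$. Hence $\Om W^\perp\subseteq W^\perp$. This step does not use $S\succeq0$; it only needs $W$ to be a subspace with $\Om W=W$.

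\emph{Nondegeneracy of $\Om|_W$.} I read this as nondegeneracy of the bilinear form $\omega(x,y)=x^\top\Om y$ restricted to $W$. Suppose $x\in W$ satisfies $x^\top\Om y=0$ for every $y\in W$. This says $\Om^\top x=-\Om x$ is orthogonal to $W$, so $\Om x\in W^\perp$. We also have $\Om x\in W$ by the invariance step. Therefore $\Om x\in W\cap W^\perp=\{0\}$, and $x=0$ because $\Om$ is invertible.

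As a by-product, the same argument shows that $\Om|_W$ is an invertible linear map of $W$ onto itself. Moreover $W$ is even-dimensional: $\Om|_W$ is a real linear map squaring to $-I$, so it is a complex structure on $W$, which gives $\dim W\le 2\,\rank(S)$ as used later.

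I do not expect a real obstacle. The one point needing care is which reading of ``$\Om|_W$ nondegenerate'' is meant. Both the bilinear-form reading and the invertible-map reading follow from the orthogonal decomposition $\R^{2M}=W\oplus W^\perp$ with both summands $\Om$-invariant. That decomposition relies on $\Om$ being orthogonal, not merely skew, which holds in this standard frame.
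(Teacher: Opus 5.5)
Your proof is correct and follows the paper's argument: the same computation $\Om W=\Om R+R=W$, the same skewness argument for $W^\perp$, and for nondegeneracy your step $\Om x\in W\cap W^\perp=\{0\}$ makes explicit what the paper leaves implicit, namely that invertibility of $\Om|_W$ (which the paper gets from $(\Om|_W)^2=-I_W$) yields nondegeneracy of the form $x^\top\Om y$ on $W$. One small slip is in your side remark: $\dim W\le 2\,\rank(S)$ does not follow from $W$ carrying a complex structure, which gives only that $\dim W$ is even; it follows directly from $\dim(R+\Om R)\le\dim R+\dim \Om R=2\,\rank(S)$.
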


\begin{proof}
From $W=\Ran S+\Om\Ran S$ and $\Om^2=-I$ one gets $\Om W\subseteq W$. If $x\in W^\perp$ and $w\in W$ then $\Om w\in W$, so $\langle\Om x,w\rangle=-\langle x,\Om w\rangle=0$, hence
$\Om W^\perp\subseteq W^\perp$. On the invariant subspace $W$ the restriction $\Om|_W$ satisfies $(\Om|_W)^2=-I_W$, so it is invertible, which is exactly nondegeneracy of the symplectic form on
$W$.
\end{proof}

\begin{lemma}[Pseudoinverse Schur complement]\label{lem:pinv_schur}
Let $\bigl(\begin{smallmatrix}A&B\\B^\top&C\end{smallmatrix}\bigr)$ be symmetric with $C\succeq0$. Then it is positive semidefinite if and only if $C\succeq0$, $B^\top\in\Ran(C)$ (equivalently
$(I-CC^\dagger)B^\top=0$), and $A-BC^\dagger B^\top\succeq0$, where $C^\dagger$ is the Moore--Penrose pseudoinverse.
\end{lemma}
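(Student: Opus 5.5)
The plan is to reduce to the case $C\succ0$ by splitting the second block coordinate along the orthogonal decomposition $\Ran(C)\oplus\ker(C)$, with the orthogonal projector $CC^\dagger=C^\dagger C$ onto $\Ran(C)$ doing the bookkeeping. Write $G$ for the block matrix and test vectors $z=(x,y)$, so that
\[
z^\top G z=x^\top Ax+2x^\top By+y^\top Cy .
\]
For necessity, $C\succeq0$ is immediate because a principal block of a positive matrix is positive.

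\textbf{Necessity.} For the range condition, take $y\in\ker C$ and any $x$. Replace $x$ by $tx$ with $t\in\R$; positivity of $t^2x^\top Ax+2t\,x^\top By$ for all $t$ forces $x^\top By=0$. Hence $By=0$ for every $y\in\ker C$, i.e. $\ker C\subseteq\ker B$. Taking orthogonal complements (recall $C$ is symmetric) gives $\Ran(B^\top)\subseteq\Ran(C)$. Equivalently, $(I-CC^\dagger)B^\top=0$, since $I-CC^\dagger$ is the projector onto $\ker C$.

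For the Schur complement, choose $y=-C^\dagger B^\top x$. Using $C^\dagger CC^\dagger=C^\dagger$ and the symmetry of $C^\dagger$, the quadratic form collapses to
\[
z^\top Gz=x^\top\bigl(A-BC^\dagger B^\top\bigr)x,
\]
which must be nonnegative for every $x$. This gives $A-BC^\dagger B^\top\succeq0$.

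\textbf{Sufficiency.} Assume the three conditions hold. Then $B=BCC^\dagger$, because $B^\top=CC^\dagger B^\top$ and $CC^\dagger$ is symmetric. This identity yields the congruence
\[
\begin{pmatrix}I&-BC^\dagger\\0&I\end{pmatrix}
\begin{pmatrix}A&B\\B^\top&C\end{pmatrix}
\begin{pmatrix}I&0\\-C^\dagger B^\top&I\end{pmatrix}
=\begin{pmatrix}A-BC^\dagger B^\top&0\\0&C\end{pmatrix}.
\]
The off-diagonal blocks vanish because $B-BC^\dagger C=0$. The outer factors are invertible, and the right side is block diagonal with both blocks positive semidefinite, so $G\succeq0$.

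\textbf{Main obstacle.} The only delicate point is the algebra of the pseudoinverse. The identities $CC^\dagger=C^\dagger C$ (the orthogonal projector onto $\Ran C$) and $B=BC^\dagger C$ must be used in exactly the right places when the off-diagonal terms are cancelled. The first of these holds because $C$ is symmetric, and the second follows from the range condition. With these established, the remaining computation is a routine expansion.
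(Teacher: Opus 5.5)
Your proof is correct and follows essentially the paper's route: the test-vector argument giving $\ker C\subseteq\ker B$ and the choice $y=-C^\dagger B^\top x$ are the paper's necessity steps, and your congruence by $\bigl(\begin{smallmatrix}I&-BC^\dagger\\0&I\end{smallmatrix}\bigr)$ is the matrix form of the paper's completed square $x^\top(A-BC^\dagger B^\top)x+(y+C^\dagger B^\top x)^\top C\,(y+C^\dagger B^\top x)$. The only cosmetic issue is that your announced reduction to $C\succ0$ via $\Ran(C)\oplus\ker(C)$ is never carried out and is not needed.
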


\begin{proof}
If the block matrix is positive, testing $(0,y)$ gives $C\succeq0$, and for $y\in\ker C$ the form
$\binom{x}{ty}^\top(\cdots)\binom{x}{ty}=x^\top Ax+2t\,x^\top By$ stays nonnegative for all $t$ only if $By=0$, i.e. $B\ker C=0$, equivalently $B^\top\in\Ran C$. Completing the square gives
$\binom{x}{y}^\top(\cdots)\binom{x}{y}=x^\top(A-BC^\dagger B^\top)x+(y+C^\dagger B^\top x)^\top C\,(y+C^\dagger B^\top x)$,
using $BC^\dagger C=B$ from the range condition. The second term is nonnegative and vanishes at $y=-C^\dagger B^\top x$, so the form is nonnegative for all $x,y$ if and only if $A-BC^\dagger B^\top\succeq0$, which gives both directions.
\end{proof}

\begin{theorem}[Exact active reduction]\label{thm:active_reduction}
Let $S\succeq0$, $A=I+S$, and $W=\Ran(S)+\Om\Ran(S)$. Then:
\begin{enumerate}
\item[\rm(a)] the oracle splits, $\Vq(A)=\Vq^W\oplus\half I_{W^\perp}$;
\item[\rm(b)] writing $V=\bigl(\begin{smallmatrix}V_{11}&B\\B^\top&V_{22}\end{smallmatrix}\bigr)$ relative to $W\oplus W^\perp$, the feasibility $V-\Vq(A)\succeq0$ is equivalent to
$V_{22}-\half I\succeq0$, $B^\top\in\Ran(V_{22}-\half I)$, and
$V_{11}-B(V_{22}-\half I)^\dagger B^\top-\Vq^W\succeq0$;
\item[\rm(c)] the original feasibility reduces to $V_{\mathrm{eff}}(W)-\Vq^W\succeq0$, where
$V_{\mathrm{eff}}(W)=V_{11}-B(V_{22}-\half I)^\dagger B^\top$.
\end{enumerate}
\end{theorem}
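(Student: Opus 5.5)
Part (a) is the heart of the matter; parts (b) and (c) then follow from Lemma~\ref{lem:pinv_schur}.

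For (a), we use Lemma~\ref{lem:W_symp}: both $W$ and $W^\perp$ are $\Om$-invariant. In an orthonormal basis adapted to $W\oplus W^\perp$, $\Om$ is therefore block diagonal, $\Om=\Om_W\oplus\Om_{W^\perp}$, with each block real, skew and squaring to $-I$. Since $\Ran(S)\subseteq W$ and $S$ is symmetric, $S$ vanishes on $W^\perp$ and maps $W$ into itself. Hence $A=I+S=A_W\oplus I_{W^\perp}$ with $A_W=I_W+S|_W\succ0$.

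Functional calculus respects this block structure, so $A^{1/2}=A_W^{1/2}\oplus I$ and
\[
B(A)=A^{1/2}\Om A^{1/2}=B_W\oplus\Om_{W^\perp},\qquad B_W:=A_W^{1/2}\Om_W A_W^{1/2}.
\]
Taking absolute values blockwise gives $|B(A)|=|B_W|\oplus|\Om_{W^\perp}|=|B_W|\oplus I$, because $\Om_{W^\perp}$ is orthogonal. Substituting into the oracle formula \eqref{eq:oracle} yields
\[
\Vq(A)=\half A_W^{-1/2}|B_W|A_W^{-1/2}\oplus\half I_{W^\perp}.
\]
We define $\Vq^W$ to be the first block. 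It is the oracle of the sub-problem on $W$ with symplectic form $\Om_W$; being the oracle output of a lower-dimensional problem, it is pure by Theorem~\ref{thm:oracle} applied on $W$.

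For (b), note that $V-\Vq(A)$ has the same off-diagonal block $B$ as $V$, with diagonal blocks $V_{11}-\Vq^W$ and $V_{22}-\half I$. Lemma~\ref{lem:pinv_schur} as stated assumes $C\succeq0$, but its proof also gives necessity of $C\succeq0$ (test vectors $(0,y)$). Applying it with $C=V_{22}-\half I$ gives exactly the three listed conditions, and (c) is then the restatement that, once the spectator conditions hold, feasibility is $V_{\mathrm{eff}}(W)-\Vq^W\succeq0$.

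The main point needing care is the claim that $W^\perp$ carries no contribution from $S$ and that $\Om$ genuinely block-diagonalizes. Both follow from Lemma~\ref{lem:W_symp} together with the fact that a symmetric $S$ has $\ker S=\Ran(S)^\perp\supseteq W^\perp$. A secondary subtlety is the interpretation of (c). The spectator conditions $V_{22}\succeq\half I$ and the range condition are part of the equivalence in (b); so (c) should be read as: under these spectator conditions, which are automatic whenever $\Vq(A)$ is feasible, the full feasibility reduces to the active inequality.
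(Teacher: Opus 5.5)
Your proposal is correct and follows essentially the same route as the paper. Lemma~\ref{lem:W_symp} together with $\ker S\supseteq W^\perp$ block-diagonalizes $\Om$ and gives $A=A_W\oplus I_{W^\perp}$, the oracle is then computed blockwise, and Lemma~\ref{lem:pinv_schur} is applied to $V-\Vq(A)$. Two of your remarks are points of care the paper's proof passes over: the Schur lemma's hypothesis $C\succeq0$ cannot be assumed here (its proof gives necessity instead), and (c) holds only under the spectator conditions.
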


Modes in $W^\perp$ are vacuum or purely classical spectators for the optimization: discarding them removes no irreducible quantum information, and their only effect is the explicit
Schur-complement correction $B(V_{22}-\half I)^\dagger B^\top$ in the reduced covariance. For an optimal dual point $S^\star$ the computationally essential part of the problem is thus compressed \emph{exactly} onto
$W^\star=\Ran(S^\star)+\Om\Ran(S^\star)$, of dimension at most $2\,\rank(S^\star)$; sharpening $\dim W^\star$ further is left to future work.

\begin{proof}
Because $W$ and $W^\perp$ are $\Om$-invariant (Lemma~\ref{lem:W_symp}), $\Om$ is block diagonal in
$W\oplus W^\perp$, and since $S$ vanishes on $W^\perp$ we have $A=A_W\oplus I_{W^\perp}$. Hence $B(A)=A^{1/2}\Om A^{1/2}$ is block diagonal, and so is the oracle output; on $W^\perp$, where
$A=I$, it equals $\half|\Om|=\half I$, giving (a). Part (b) is Lemma~\ref{lem:pinv_schur} applied to $V-\Vq(A)$, whose $W^\perp$ block is $V_{22}-\half I$, and (c) is the restatement of the last
inequality in terms of the effective reduced covariance $V_{\mathrm{eff}}(W)$.
\end{proof}

\begin{remark}[Active reduction and reconstruction]\label{rem:active_reconstruction}
For a dual optimum $S^\star$, the active space $W^\star=\Ran(S^\star)+\Om\Ran(S^\star)$ contains all the nonvacuum quantum content: by Corollary~\ref{cor:dual_to_primal_reconstruction} the optimizer $\Vq^\star=\Vq(I+S^\star)$ is the oracle block on $W^\star$ and vacuum on $(W^\star)^\perp$, so the dual maximizer locates the active sector once the dual support is fixed and determines the optimizer within it. By Theorem~\ref{thm:active_reduction} the reduction is exact: with the blocks taken relative to $W^\star\oplus(W^\star)^\perp$, feasibility on the full space is equivalent to $V_{\mathrm{eff}}(W^\star)-\Vq^{W^\star}\succeq0$ on the active sector alone, where $V_{\mathrm{eff}}(W^\star)=V_{11}-B(V_{22}-\half I)^\dagger B^\top$. The effective covariance differs from the bare restriction $V_{11}$ by the Schur-complement term $B(V_{22}-\half I)^\dagger B^\top$, which incorporates the effect of the correlations $B$ between the active modes and the spectator modes in $(W^\star)^\perp$. The spectator modes thus contribute only vacuum to $\Vq^\star$, while their correlations with the active sector are retained exactly in $V_{\mathrm{eff}}(W^\star)$; the computationally essential part of Gaussian resource extraction is compressed onto the fixed-support sector with no approximation.
\end{remark}

\section{Symplectic reformulation}\label{sec:reform}

Purity allows the entire program to be rewritten in the geometry of pure states. Since a pure covariance is $\half S^\top S$ for a symplectic $S$ (Lemma~\ref{lem:pure_param}), the trace
objective becomes the squared Frobenius norm of $S$.

\begin{theorem}[Pure-covariance reformulation]\label{thm:symp_reform}
The program \eqref{eq:sdp_full} has the same optimal value as
\begin{equation}\label{eq:symp_sdp}
\min_{S\in\Sp(2M,\R)}\ \tfrac12\Tr(S^\top S)\quad\text{subject to}\quad V\succeq\tfrac12S^\top S,
\end{equation}
and every optimizer is $\Vq^\star=\tfrac12S^{\star\top}S^\star$ for some $S^\star\in\Sp(2M)$.
\end{theorem}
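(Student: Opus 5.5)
The plan is to compare the two feasible sets directly, using purity (Theorem~\ref{thm:purity}) and the parametrization of pure states (Lemma~\ref{lem:pure_param}). Write $P$ for the optimal value of \eqref{eq:sdp_full} and $P'$ for the infimum in \eqref{eq:symp_sdp}.

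\emph{Step 1: $P\le P'$.} Take any $S\in\Sp(2M,\R)$ feasible for \eqref{eq:symp_sdp} and set $X=\half S^\top S$. By Lemma~\ref{lem:pure_param}, $X$ is a pure covariance. In particular it is physical, since its symplectic eigenvalues all equal $\half$, so $X+\tfrac{i}{2}\Om\succeq0$. The constraint gives $V-X\succeq0$, so $X$ is feasible for \eqref{eq:sdp_full}. Moreover $\Tr X=\half\Tr(S^\top S)$, the objective of \eqref{eq:symp_sdp}. Hence every value of \eqref{eq:symp_sdp} is a value of \eqref{eq:sdp_full}.

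\emph{Step 2: $P'\le P$ and attainment.} By the compactness argument in the proof of Theorem~\ref{thm:uniqueness}, \eqref{eq:sdp_full} has an optimizer $\Vq^\star$. Theorem~\ref{thm:purity} says $\Vq^\star$ is pure. It is positive definite because $\Vq^\star+\tfrac{i}{2}\Om\succeq0$ forces $\Vq^\star\succ0$; this is the standard fact that a physical covariance has all symplectic eigenvalues at least $\half>0$ and hence Williamson form $T^{-1}DT^{-\top}$ with $D\succ0$. Lemma~\ref{lem:pure_param} then gives $S^\star\in\Sp(2M)$ with $\Vq^\star=\half S^{\star\top}S^\star$. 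This $S^\star$ satisfies $V\succeq\half S^{\star\top}S^\star$ and has objective value $\Tr\Vq^\star=P$. So $P'\le P$, the two optimal values coincide, and the minimum in \eqref{eq:symp_sdp} is attained at $S^\star$.

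\emph{Step 3: form of the optimizers.} If $S$ attains the minimum in \eqref{eq:symp_sdp}, then by Step 1 the matrix $\half S^\top S$ is feasible for \eqref{eq:sdp_full} with trace $P$, so it is an optimizer of \eqref{eq:sdp_full}. By Theorem~\ref{thm:uniqueness} it equals $\Vq^\star$. Conversely, Step 2 shows that $\Vq^\star$ is of this form. The factor $S^\star$ is determined only up to left multiplication by $\mathrm U(M)$, which matches the quotient $\Sp(2M)/\mathrm U(M)$ of Lemma~\ref{lem:pure_param}.

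The only delicate point is the one in Step 2: Lemma~\ref{lem:pure_param} is stated for $\Vq\in\Sym^{++}_{2M}$, so I must check that physical covariances are positive definite before applying it. This follows directly from Williamson's theorem as indicated, and the rest of the argument is routine.
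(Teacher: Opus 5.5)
Your proof is correct and follows the paper's argument. A feasible $S$ gives the pure, primal-feasible matrix $\half S^\top S$ with the same objective value; conversely, the primal optimizer is pure by Theorem~\ref{thm:purity} and factors as $\half S^{\star\top}S^\star$ by Lemma~\ref{lem:pure_param}. Your extra checks (existence of the optimizer, and using Theorem~\ref{thm:uniqueness} to show every minimizing $S$ returns $\Vq^\star$) only make explicit what the paper leaves implicit.

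One point to tighten is your justification of $\Vq^\star\succ0$. Deriving it from Williamson's theorem is circular, because that theorem already assumes positive definiteness. Argue directly instead: for real $x$ one has $x^\top\Vq^\star x=x^\ast(\Vq^\star+\tfrac{i}{2}\Om)x\ge0$, and equality forces $(\Vq^\star+\tfrac{i}{2}\Om)x=0$. Its imaginary part gives $\Om x=0$, hence $x=0$.
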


\begin{proof}
If $S\in\Sp(2M)$ satisfies $V\succeq\half S^\top S$, then $\Vq=\half S^\top S$ is pure (Lemma~\ref{lem:pure_param}) and primal feasible, so the SDP minimum is at most the value of
\eqref{eq:symp_sdp}. Conversely, by Theorem~\ref{thm:purity} and Lemma~\ref{lem:pure_param} any primal optimizer $\Vq^\star=\half S^{\star\top}S^\star$ is feasible for \eqref{eq:symp_sdp}, so the
values agree.
\end{proof}

The feasible set descends to the Riemannian symmetric space
$\Sp(2M)/\mathrm U(M)\cong\mathcal H_M$, the Siegel upper half-space, on which the objective $\tfrac12\|S\|_F^2$ is $\mathrm U(M)$-invariant. The quotient by $\mathrm U(M)$ removes exactly the passive frame changes, which act on a pure Gaussian state without altering it, so $\mathcal H_M$, the complex symmetric matrices $Z$ with $\operatorname{Im}Z\succ0$, is a coordinate system for the pure Gaussian states themselves, with $Z$ encoding the squeezing and mode mixing of $\Vq^\star$. This is the symplectic analogue
of the orthogonal Procrustes problem, with $\Sp(2M)$ in place of $\mathrm O(n)$, and it suggests Riemannian-optimization methods on $\mathcal H_M$, which are left to future work.

\section{Conclusion}\label{sec:conclusion}

For every physical covariance matrix $V$ the program (\ref{eq:sdp_full}) extracts a canonical pure Gaussian component $\Vq^\star$, which is unique, reconstructed from any dual maximizer by the oracle $\Vq(I+S^\star)$, governed in its inner step by the Riccati identity, and localized: its classical remainder is singular in at least $\kappa(V)$ directions, and once a dual support is fixed the problem compresses exactly onto the associated active symplectic sector. The passive-diagonalizable states are solved in closed form,
the first explicit solvable class, and the pure-covariance reformulation places all of this on the symmetric space $\Sp(2M)/\mathrm U(M)$. The program therefore determines, beyond the scalar resource monotone given by its trace, a canonical localized pure Gaussian component with geometric and operational content.

A general closed form for the outer support-fitting problem beyond the passive-diagonalizable class is not established here. That step requires additional geometric input and is left to future work, which includes the geometry of generically minimal active sectors, and the two-mode coupled sector, where the closed forms cease to apply and a Galois-theoretic obstruction to any radical formula arises.

\appendix

\section*{APPENDIX: TECHNICAL LEMMAS}\label{app:technical}

We collect the auxiliary facts used above. The first converts the Hermitian physicality constraint to a real one, which renders \eqref{eq:sdp_full} an ordinary real SDP.

\begin{lemma}[Real block conversion]\label{lem:realify}
For $H=C+iD$ with $C$ symmetric and $D$ skew, $H\succeq0$ if and only if
$\bigl(\begin{smallmatrix}C&-D\\D&C\end{smallmatrix}\bigr)\succeq0$. In particular
$\Vq+\tfrac{i}{2}\Om\succeq0$ iff
$\bigl(\begin{smallmatrix}\Vq&-\frac12\Om\\\frac12\Om&\Vq\end{smallmatrix}\bigr)\succeq0$.
\end{lemma}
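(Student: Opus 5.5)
The plan is to pass through the standard realification of complex Hermitian matrices, the map $\rho:\C^{n\times n}\to\R^{2n\times2n}$,
\[
\rho(C+iD)=\begin{pmatrix}C&-D\\ D&C\end{pmatrix},
\]
which is the matrix of multiplication by $H$ once $\C^n$ is identified with $\R^{2n}$ via $z=x+iy\mapsto\binom{x}{y}$. First I will check that $\rho(H)$ is real symmetric exactly when $H$ is Hermitian. With $C^\top=C$ and $D^\top=-D$, the transpose of the block matrix is $\bigl(\begin{smallmatrix}C^\top&D^\top\\-D^\top&C^\top\end{smallmatrix}\bigr)=\bigl(\begin{smallmatrix}C&-D\\D&C\end{smallmatrix}\bigr)$, as required.

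The key step is to compare the two quadratic forms. For $z=x+iy$ I will expand
\[
z^\ast Hz=(x-iy)^\top(C+iD)(x+iy).
\]
Its real part is
\[
x^\top Cx+y^\top Cy-x^\top Dy+y^\top Dx .
\]
The imaginary part vanishes because $H$ is Hermitian. Using the skew symmetry of $D$, the real part equals $\binom{x}{y}^\top\rho(H)\binom{x}{y}$. So $z^\ast Hz=w^\top\rho(H)w$ with $w=\binom{x}{y}$, and since $z\mapsto w$ is a bijection $\C^n\to\R^{2n}$, the equivalence $H\succeq0\iff\rho(H)\succeq0$ follows at once. This sign bookkeeping, namely matching $-D$ in the upper right against the cross terms $-x^\top Dy+y^\top Dx=2y^\top Dx$, is the only place where care is needed. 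I expect no real obstacle. The alternative route, that the eigenvalues of $\rho(H)$ are those of $H$ doubled (since $\rho(H)$ is unitarily similar to $H\oplus\bar H$), gives the same result and serves as a cross-check.

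For the particular case I will take $C=\Vq$ and $D=\tfrac12\Om$. This $D$ is real skew because $\Om^\top=-\Om$. Substituting into $\rho$ gives exactly $\bigl(\begin{smallmatrix}\Vq&-\frac12\Om\\\frac12\Om&\Vq\end{smallmatrix}\bigr)$, which proves the stated equivalence. The result is linear in $\Vq$, so the program \eqref{eq:sdp_full} becomes a real SDP.
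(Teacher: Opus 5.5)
Your proposal is correct and follows the paper's proof: the paper also proves the lemma by the single identity $z^\ast Hz=\binom{x}{y}^\top\bigl(\begin{smallmatrix}C&-D\\D&C\end{smallmatrix}\bigr)\binom{x}{y}$ for $z=x+iy$, combined with the bijection $\C^n\cong\R^{2n}$. Your extra checks fill in details the paper leaves implicit: symmetry of the block matrix, vanishing of the imaginary part, and the specialization $C=\Vq$, $D=\tfrac12\Om$.
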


\begin{proof}
For $z=x+iy$ one has $z^\ast Hz=(x,y)^\top\bigl(\begin{smallmatrix}C&-D\\D&C\end{smallmatrix}
\bigr)(x,y)$.
\end{proof}

\begin{proposition}[Strong duality under Slater]\label{lem:strong_duality}
If Slater's condition holds, the primal is a strictly feasible real SDP
(Lemma~\ref{lem:realify}), bounded below since $\Vq\succeq0$ on $\mathcal Q$ gives $\Tr\Vq\ge0$, and strong duality with dual attainment follows from standard convex
duality~\cite{BenTalNemirovski}.
\end{proposition}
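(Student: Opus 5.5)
The statement is Proposition~\ref{lem:strong_duality}: under Slater's condition, strong duality holds and the dual optimum is attained. The plan is to recast the primal \eqref{eq:sdp_full} as a standard conic program over real positive semidefinite matrices and then apply the conic duality theorem of~\cite{BenTalNemirovski}.

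\textbf{Step 1: realification.} By Lemma~\ref{lem:realify}, the physicality constraint $\Vq+\tfrac{i}{2}\Om\succeq0$ is equivalent to the linear matrix inequality
\[
\mathcal L_1(\Vq):=\begin{pmatrix}\Vq&-\tfrac12\Om\\ \tfrac12\Om&\Vq\end{pmatrix}\succeq0 .
\]
The feasibility constraint is $\mathcal L_2(\Vq):=V-\Vq\succeq0$. Both are affine in $\Vq\in\Sym_{2M}(\R)$, and the objective $\Tr\Vq$ is linear. Stacking them block-diagonally as $\mathcal L(\Vq)=\mathcal L_1(\Vq)\oplus\mathcal L_2(\Vq)\succeq0$ gives a standard real SDP over a finite-dimensional space.

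\textbf{Step 2: strict feasibility.} Strict mixedness means every symplectic eigenvalue of $V$ exceeds $\half$. Write $V=T^\top\diag(\nu,\nu)T$ with $T\in\Sp(2M)$ and $\nu_j>\half$. Choose $\half<\mu_j<\nu_j$ and set $X=T^\top\diag(\mu,\mu)T$.
\begin{itemize}
\item $X$ has symplectic eigenvalues $\mu_j>\half$, so $X+\tfrac{i}{2}\Om\succ0$. Equivalently $\mathcal L_1(X)\succ0$, since the real form of a positive definite Hermitian matrix is positive definite.
\item $V-X=T^\top\diag(\nu-\mu,\nu-\mu)T\succ0$.
\end{itemize}
Hence $X$ is a strictly feasible point (a Slater point).

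\textbf{Step 3: boundedness.} The primal value is bounded below by $0$, since every feasible $\Vq$ satisfies $\Vq\succeq0$ and hence $\Tr\Vq\ge0$. The conic duality theorem then gives equality of the primal value with the conic dual value, and attainment of the conic dual.

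\textbf{Step 4: matching the dual to $D(S)$.} This is the main obstacle. The conic dual has multipliers $Z_1\succeq0$ (size $4M$) and $S\succeq0$ (size $2M$). I would show its value coincides with $\max_{S\succeq0}D(S)$.
\begin{itemize}
\item For fixed $S$, maximizing over $Z_1$ is exactly the Lagrangian dual of the inner problem $\Phi(I+S)=\min_{\mathcal Q}\Tr((I+S)\Vq)$. That inner problem is strictly feasible, since $\Vq=I$ satisfies $I+\tfrac{i}{2}\Om\succ0$, and bounded below.
\item By inner strong duality, the partial maximization over $Z_1$ therefore returns $-\Tr(SV)+\Phi(I+S)=D(S)$.
\item Hence a maximizing pair $(Z_1^\star,S^\star)$ of the conic dual yields a maximizer $S^\star$ of $D$, with $D(S^\star)$ equal to the primal optimum.
\end{itemize}
Alternatively, one can bypass the matching by applying Lagrangian duality with respect to the constraint $V-\Vq\succeq0$ only, keeping $\mathcal Q$ as an abstract closed convex set. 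The convex-programming Slater theorem then directly yields a multiplier $S^\star\succeq0$ with $\inf_{\Vq\in\mathcal Q}L(\Vq,S^\star)$ equal to the primal value. This second route is the one I would write down, since it avoids the intermediate conic reformulation entirely.
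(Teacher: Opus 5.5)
Your proposal is correct. Steps 1--3 follow the paper's own terse argument: realify via Lemma~\ref{lem:realify}, note strict feasibility and $\Tr\Vq\ge0$, and cite the conic duality theorem of~\cite{BenTalNemirovski}.

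You supply two things the paper leaves implicit.
\begin{enumerate}
\item An explicit interior point $X=T^\top\diag(\mu,\mu)T$ that is strict for \emph{both} matrix inequalities. This is what the realified SDP needs; Definition~\ref{def:slater} only asks for a state $X\prec V$, though adding $\epsilon I$ to such a state also works.
\item The identification of the conic dual with the reduced dual $\max_{S\succeq0}D(S)$ that Theorem~\ref{thm:dual_kkt} actually uses. The conic dual carries a second multiplier $Z_1\succeq0$ for the realified physicality constraint, and the paper's appeal to ``standard convex duality'' passes over this reduction.
\end{enumerate}

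Your matching step is sound, but inner strong duality is more than you need. Since $\Tr(Z_1\mathcal L_1(\Vq))\ge0$ on $\mathcal Q$, the conic dual function obeys $g(Z_1,S)\le D(S)$. With $p^\star$ the primal value, weak duality alone then gives $p^\star=g(Z_1^\star,S^\star)\le D(S^\star)\le p^\star$.

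The route you say you would actually write down, dualizing only $V-\Vq\succeq0$ and keeping $\mathcal Q$ as an abstract closed convex set, is genuinely different from the paper's.
\begin{enumerate}
\item It invokes the general Lagrange duality theorem for cone-constrained convex programs rather than SDP conic duality, and never uses Lemma~\ref{lem:realify}.
\item It produces the multiplier $S^\star$ for $D$ directly, so no matching step is needed.
\item Its Slater hypothesis is exactly the paper's: some $X\in\mathcal Q$ with $V-X\succ0$, with no strictness required inside $\mathcal Q$. The separating-hyperplane proof only needs the tested constraint to be strict at that point.
\end{enumerate}
The paper's route gains a verbatim appeal to a textbook SDP theorem. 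The price is that the reduction from the two-multiplier conic dual to $D(S)$ is left unstated; your Step~4, or your alternative route, is what closes it.
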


\begin{lemma}[Skew canonical form]\label{lem:skew_canonical}
A skew-symmetric invertible $B\in\R^{2M\times2M}$ admits $Q\in\mathrm O(2M)$ and
$\sigma_1,\dots,\sigma_M>0$ with $Q^\top BQ=\bigoplus_j\sigma_jJ_2$.
\end{lemma}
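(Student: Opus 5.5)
The plan is to use the real spectral theorem for normal operators, applied to the skew matrix $B$. First I would pass to $B^\top B=-B^2$, which is real symmetric and positive definite because $B$ is invertible. Take its orthonormal eigendecomposition and note that $B$ commutes with $-B^2$, so $B$ preserves each eigenspace $E_\mu=\ker(-B^2-\mu I)$, where $\mu>0$. The canonical form can then be built one eigenspace at a time, and the pieces assembled orthogonally.

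On a fixed eigenspace $E_\mu$, set $\sigma=\sqrt\mu>0$ and $J=B|_{E_\mu}/\sigma$. Then $J$ is skew (restriction of a skew operator to an invariant subspace, in an orthonormal basis) and $J^2=-I$, so $J$ is orthogonal. I would pick any unit vector $e\in E_\mu$ and set $f=-Je$, which is also a unit vector. Skewness gives $\langle e,Je\rangle=0$, so $e\perp f$. With this choice
\[
Be=-\sigma f, \qquad Bf=-\sigma J^2e=\sigma e.
\]
In the ordered basis $(e,f)$ the matrix of $B$ is therefore $\sigma\bigl(\begin{smallmatrix}0&1\\-1&0\end{smallmatrix}\bigr)=\sigma J_2$. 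The plane $P=\operatorname{span}\{e,f\}$ is $J$-invariant, and so is $P^\perp\cap E_\mu$, since $J$ is orthogonal and preserves $E_\mu$; equivalently, skewness sends orthogonal complements of invariant subspaces to invariant subspaces. Inducting on the dimension splits $E_\mu$ into orthogonal $2$-planes, each carrying the block $\sigma J_2$. As a byproduct, every $E_\mu$ has even dimension.

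Finally I would concatenate these orthonormal bases over all eigenvalues $\mu$ to form the columns of $Q\in\mathrm O(2M)$. The eigenspaces are mutually orthogonal and together span $\R^{2M}$, so this gives exactly $M$ blocks and $Q^\top BQ=\bigoplus_j\sigma_jJ_2$ with every $\sigma_j>0$.

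The only step needing care is the sign convention: the ordering $(e,f)$ with $f=-Je$ has to be chosen so that each block is $+\sigma J_2$ rather than $-\sigma J_2$. If a block came out with the wrong sign, swapping $e$ and $f$ would fix it. This is routine, so I expect no real obstacle.
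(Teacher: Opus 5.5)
Your proposal is correct and follows essentially the paper's argument. Both take a unit eigenvector $e$ of $-B^2=B^\top B$ with eigenvalue $\sigma^2$, set $f=-\sigma^{-1}Be$, read off the block $\sigma J_2$ in the basis $(e,f)$, and recurse on the orthogonal complement, which is $B$-invariant by skewness. Your preliminary splitting into the eigenspaces of $-B^2$ only reorganizes the paper's direct recursion on $E_1^\perp$.
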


\begin{proof}
$-B^2=B^\top B\succ0$; pick a unit eigenvector $e_1$ with eigenvalue $\sigma_1^2$ and set
$f_1=-\sigma_1^{-1}Be_1$. Then $\{e_1,f_1\}$ are orthonormal and $B$-invariant with
$B|_{E_1}=\sigma_1J_2$; iterate on $E_1^\perp$.
\end{proof}

\begin{lemma}[Symplectic eigenvalues from the transport]\label{lem:symp_sv}
The singular values of the skew transport $B(A)=A^{1/2}\Om A^{1/2}$ are the symplectic eigenvalues
of $A$.
\end{lemma}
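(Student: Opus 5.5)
The statement is that the singular values of $B(A)=A^{1/2}\Om A^{1/2}$ are the symplectic eigenvalues of $A$. I will use the characterization recalled in Section~\ref{sec:prelim}: the $\nu_j(A)$ are the positive numbers for which $\pm\nu_j$ are the eigenvalues of $i\Om A$. The plan has two steps. First, relate $B(A)$ to $\Om A$ by a similarity. Second, use skew-symmetry to turn eigenvalues of $B(A)$ into its singular values.

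For the first step, note that
\[
A^{1/2}(\Om A)A^{-1/2}=A^{1/2}\Om A^{1/2}=B(A).
\]
So $B(A)$ is similar to $\Om A$, and $iB(A)$ is similar to $i\Om A$. Hence $iB(A)$ has eigenvalues $\pm\nu_1(A),\dots,\pm\nu_M(A)$, counted with multiplicity.

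For the second step, $B(A)$ is real skew-symmetric, so $iB(A)$ is Hermitian. Its singular values are therefore the absolute values of its eigenvalues. The singular values of $iB(A)$ coincide with those of $B(A)$, since multiplying by the unimodular scalar $i$ does not change $\sqrt{B^\ast B}$. The eigenvalues $\pm\nu_j$ give each $\nu_j$ twice as a singular value. This matches the canonical form of Lemma~\ref{lem:skew_canonical}, in which each $\sigma_j$ appears twice as a singular value of the block $\sigma_jJ_2$. Consequently $\sigma_j=\nu_j(A)$ after ordering, which is the statement used in the oracle theorem.

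An alternative route, which avoids taking the eigenvalue characterization as given, starts from Williamson's form $A=S^\top DS$. One then writes $B(A)$ as a product in which $S\Om S^\top=\Om$ can be applied, and checks that $B(A)^\top B(A)=A^{1/2}\Om^\top A\Om A^{1/2}$ is similar to $(\Om^\top A)(\Om A)=-(\Om A)^2$. The latter is in turn similar to $\Om^\top D\Om D=D^2$ on each mode, with eigenvalues $\nu_j^2$, each of multiplicity two.

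There is no real obstacle here. The only point requiring care is the bookkeeping of multiplicities: each $\nu_j$ must appear exactly twice, which both the Hermitian $\pm$ pairing and the $2\times2$ block structure guarantee.
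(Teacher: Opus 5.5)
Your proposal is correct and follows essentially the same route as the paper: the similarity $A^{-1/2}B(A)A^{1/2}=\Om A$, followed by the observation that a skew-symmetric (hence normal) matrix has singular values equal to the moduli of its eigenvalues $\pm\nu_j(A)$. Your multiplicity bookkeeping and the Williamson-based alternative are correct additions that the paper leaves implicit.
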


\begin{proof}
$A^{-1/2}B(A)A^{1/2}=\Om A$, so $iB(A)$ and $i\Om A$ are similar; since $B(A)$ is skew-symmetric, hence normal, its singular values equal the moduli of its eigenvalues, namely the $\nu_j(A)$.
\end{proof}

\begin{lemma}[The $2\times2$ trace bound]\label{lem:2x2_bound}
For $\sigma>0$ and $W\in\Sym_2(\R)$, $W+\tfrac{i}{2}\sigma J_2\succeq0$ implies $\Tr W\ge\sigma$,
with equality iff $W=\tfrac{\sigma}{2}I_2$.
\end{lemma}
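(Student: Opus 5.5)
We prove Lemma~\ref{lem:2x2_bound} by direct computation on a $2\times2$ Hermitian matrix; no earlier result is needed beyond the characterization of positive semidefiniteness via principal minors. Write $W=\bigl(\begin{smallmatrix}a&c\\c&d\end{smallmatrix}\bigr)$ with $a,c,d\in\R$. Since $J_2=\bigl(\begin{smallmatrix}0&1\\-1&0\end{smallmatrix}\bigr)$,
\[
H:=W+\tfrac{i}{2}\sigma J_2=\begin{pmatrix}a&c+\tfrac{i\sigma}{2}\\ c-\tfrac{i\sigma}{2}&d\end{pmatrix}.
\]
A $2\times2$ Hermitian matrix is positive semidefinite if and only if its diagonal entries are nonnegative and its determinant is nonnegative. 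So $H\succeq0$ gives $a,d\ge0$ and
\[
ad\ \ge\ c^2+\tfrac{\sigma^2}{4}\ \ge\ \tfrac{\sigma^2}{4}.
\]

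From this the trace bound follows by AM--GM:
\[
\Tr W=a+d\ \ge\ 2\sqrt{ad}\ \ge\ 2\sqrt{c^2+\tfrac{\sigma^2}{4}}\ \ge\ \sigma.
\]

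For the equality case, suppose $\Tr W=\sigma$. Then every inequality in the chain is tight. Tightness of the last step forces $c=0$. Tightness of the middle step forces $ad=\sigma^2/4$. Tightness of AM--GM forces $a=d$. Together these give $a=d=\sigma/2$, that is, $W=\tfrac{\sigma}{2}I_2$.

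Conversely, for $W=\tfrac{\sigma}{2}I_2$ we have
\[
H=\tfrac{\sigma}{2}\begin{pmatrix}1&i\\-i&1\end{pmatrix},
\]
whose determinant is $\tfrac{\sigma^2}{4}(1-1)=0$ and whose diagonal entries are positive. Hence $H\succeq0$ (it is rank one), and $\Tr W=\sigma$, so the equality case is attained.

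The only step that needs care is keeping track of the sign and factor conventions in the off-diagonal entries of $H$. These are what make the determinant read $ad-c^2-\sigma^2/4$, and the whole argument rests on that expression.
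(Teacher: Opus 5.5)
Your proof is correct and follows the paper's argument: the determinant condition $ad-c^2\ge\sigma^2/4$, then AM--GM, then tracing the equality case back through the chain of inequalities to force $c=0$ and $a=d=\tfrac{\sigma}{2}$. Your extra check that $W=\tfrac{\sigma}{2}I_2$ actually satisfies the constraint is a small, welcome addition; the paper leaves that step implicit.
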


\begin{proof}
Positivity needs $\alpha\beta-\gamma^2\ge\sigma^2/4$, so $\alpha+\beta\ge2\sqrt{\alpha\beta}
\ge\sigma$, with equality forcing $\gamma=0$ and $\alpha=\beta=\tfrac{\sigma}{2}$.
\end{proof}

\begin{lemma}[Rank criterion for purity]\label{lem:purity_rank}
A physical $X\in\Sym_{2M}^{++}$ is pure iff $\rank(X+\tfrac{i}{2}\Om)=M$.
\end{lemma}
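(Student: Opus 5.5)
The plan is to reduce both directions to Williamson's theorem and count the rank of $X+\tfrac{i}{2}\Om$ mode by mode. Since $X\in\Sym_{2M}^{++}$, Williamson gives $T\in\Sp(2M)$ with $TXT^\top=D=\diag(\nu_1,\dots,\nu_M,\nu_1,\dots,\nu_M)$. Because $T\Om T^\top=\Om$, we get the congruence
\[
T\bigl(X+\tfrac{i}{2}\Om\bigr)T^\top=D+\tfrac{i}{2}\Om .
\]
Congruence by the invertible matrix $T$ preserves rank. So it suffices to compute $\rank(D+\tfrac{i}{2}\Om)$.

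Next, permute coordinates to pair each $p_j$-slot with its $q_j$-slot. This permutation is a real orthogonal similarity, so it preserves both rank and positivity. In the paired basis, $D+\tfrac{i}{2}\Om$ becomes the direct sum $\bigoplus_j\bigl(\nu_jI_2+\tfrac{i}{2}J_2\bigr)$. Each $2\times2$ block $\bigl(\begin{smallmatrix}\nu_j&i/2\\-i/2&\nu_j\end{smallmatrix}\bigr)$ is Hermitian with eigenvalues $\nu_j\pm\tfrac12$. Physicality gives $\nu_j\ge\half$, so each block has rank $2$ when $\nu_j>\half$ and rank exactly $1$ when $\nu_j=\half$. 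The rank is positive in either case because $\nu_j+\half>0$. Summing over blocks gives
\[
\rank\bigl(X+\tfrac{i}{2}\Om\bigr)=2M-\#\{j:\nu_j=\half\}.
\]

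It follows that the rank equals $M$ exactly when every $\nu_j=\half$, which is the definition of purity. This proves both directions at once, and physicality also gives the lower bound $\rank\ge M$ in general. The only point needing care is the $\Om$-invariance step $T\Om T^\top=\Om$. The Williamson convention $SAS^\top$ used in Section~\ref{sec:prelim} matches the group definition $S\Om S^\top=\Om$, so $\Om$ is carried to itself under the same congruence that diagonalizes $X$. I do not expect any genuine obstacle.
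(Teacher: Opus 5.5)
Your proof is correct and matches the paper's argument: congruence by the Williamson symplectic matrix preserves rank and fixes $\Om$, which reduces the question to the $2\times2$ blocks $\nu_jI_2+\tfrac{i}{2}J_2$, each of rank one exactly when $\nu_j=\half$. Your explicit eigenvalues $\nu_j\pm\half$ and the count $\rank(X+\tfrac{i}{2}\Om)=2M-\#\{j:\nu_j=\half\}$ spell out the blockwise rank claim that the paper states without computation.
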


\begin{proof}
Congruence by the invertible Williamson symplectic $S$ preserves the rank of $X+\tfrac{i}{2}\Om$, and in Williamson coordinates each $2\times2$ block has rank one iff its symplectic eigenvalue is
$\half$, so the rank equals $M$ iff every symplectic eigenvalue equals $\half$.
\end{proof}

\begin{lemma}[Continuity and derivative of the oracle value]\label{lem:phi_directional}
The map $A\mapsto\Vq(A)$ is continuous on $\Sym_{2M}^{++}$, and
$\Phi(A)=\min_{\Vq\in\mathcal Q}\Tr(A\Vq)$ is Fr\'echet differentiable there with
$D\Phi(A)[E]=\Tr(E\,\Vq(A))$; along a differentiable curve,
$\tfrac{d}{dt}\Phi(A(t))=\Tr(A'(t)\Vq(A(t)))$.
\end{lemma}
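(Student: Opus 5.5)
The natural route is Danskin's envelope theorem, applied to the value of a linear objective minimized over a compact set, combined with the uniqueness and continuity of the minimizer from Theorem~\ref{thm:oracle}. There is one complication: $\mathcal Q$ is not compact. To get around it, I would first localize. Fix $A_0\succ0$ and a compact neighbourhood $\mathcal N$ of $A_0$ with $A\succeq cI$ on $\mathcal N$ for some $c>0$. For $A\in\mathcal N$, every $\Vq\in\mathcal Q$ obeys $\Vq\succeq0$. Also $\Phi(A)\le\Tr(A\cdot\half I)\le C$, so any near-minimizer satisfies $c\Tr\Vq\le\Tr(A\Vq)\le C+1$. The minimization over $\mathcal Q$ can therefore be restricted to the compact convex set $\mathcal K=\{\Vq\in\mathcal Q:\Tr\Vq\le (C+1)/c\}$, uniformly for $A\in\mathcal N$.

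Continuity of the oracle comes next. One option is to read it off the explicit formula~\eqref{eq:oracle}: $A\mapsto A^{\pm1/2}$ is continuous on $\Sym^{++}$, $A\mapsto B(A)$ is then continuous, and $|B|=\sqrt{B^\top B}$ is a composition of continuous maps. Alternatively, the standard argument works: minimizers in the compact $\mathcal K$ are unique, so any limit point of $\Vq(A_n)$ with $A_n\to A$ minimizes $\Tr(A\,\cdot)$ and equals $\Vq(A)$. I would give the formula argument and note the second as a cross-check.

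For differentiability, I would sandwich $\Phi(A+E)-\Phi(A)$. Since $\Vq(A)$ is feasible for the weight $A+E$, we get $\Phi(A+E)\le\Tr((A+E)\Vq(A))=\Phi(A)+\Tr(E\Vq(A))$. Symmetrically, $\Phi(A)\le\Tr(A\Vq(A+E))$, which gives $\Phi(A+E)\ge\Phi(A)+\Tr(E\Vq(A+E))$. Together,
\[
0\le \Tr(E\Vq(A))-\bigl(\Phi(A+E)-\Phi(A)\bigr)\le \Tr\bigl(E(\Vq(A)-\Vq(A+E))\bigr)\le\|E\|_F\,\|\Vq(A)-\Vq(A+E)\|_F .
\]
By continuity of the oracle, the right side is $o(\|E\|_F)$. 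This is precisely Fréchet differentiability with $D\Phi(A)[E]=\Tr(E\Vq(A))$, and it requires no compactness beyond keeping $A+E\succ0$ for small $E$. The compactness localization above is only needed if one uses the abstract continuity argument instead of the explicit formula.

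The chain rule along a differentiable curve $A(t)$ in $\Sym^{++}$ then follows from Fréchet differentiability: $\tfrac{d}{dt}\Phi(A(t))=D\Phi(A(t))[A'(t)]=\Tr(A'(t)\Vq(A(t)))$. The same formula holds for one-sided derivatives, which is the case needed in Lemma~\ref{lem:one_sided}. The main obstacle is continuity of $\Vq(\cdot)$, and specifically continuity of the matrix absolute value $|B|$. I would handle it by noting that $B^\top B\succ0$ on $\Sym^{++}$ and that the principal square root is continuous (indeed analytic) on positive-definite matrices, for instance via the holomorphic functional calculus or the integral representation $X^{1/2}=\tfrac{2}{\pi}\int_0^\infty X(X+s^2I)^{-1}\,ds$.
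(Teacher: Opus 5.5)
Your proposal is correct and follows essentially the same route as the paper: you get continuity of the oracle from the functional calculus applied to the explicit formula, and Fr\'echet differentiability from the two-sided sandwich $\Tr(E\,\Vq(A+E))\le\Phi(A+E)-\Phi(A)\le\Tr(E\,\Vq(A))$, which with that continuity makes the remainder $o(\|E\|_F)$. The compactness localization and the Danskin framing are, as you note yourself, not needed on this route.
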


\begin{proof}
Continuity is the continuous functional calculus applied to $A\mapsto A^{\pm1/2}$ and
$B\mapsto|B|$. For the derivative, optimality of $\Vq(A)$ and $\Vq(A+H)$ gives the two-sided
estimate $\Tr(H\Vq(A+H))\le\Phi(A+H)-\Phi(A)\le\Tr(H\Vq(A))$, so
$|\Phi(A+H)-\Phi(A)-\Tr(H\Vq(A))|\le\|H\|_F\,\|\Vq(A+H)-\Vq(A)\|_F=o(\|H\|_F)$ by continuity.
\end{proof}

\end{document}